\newcommand{\pd}[2]{\frac{\partial {#1}}{\partial {#2}}}
\newcommand{\RE}{\mathbb R}
\newcommand{\CO}{\mathbb C}
\newcommand{\BB}{\mathscr B}
\newcommand{\DD}{\mathscr D}
\newcommand{\HH}{\mathcal{H}}
\newcommand{\OO}{\mathcal{O}}
\newcommand{\QQ}{\mathcal Q}
\newcommand{\ulim}{\operatorname{u}-\lim}
\newcommand{\lf}{\left}
\newcommand{\ri}{\right}
\newcommand{\ve}{\varepsilon}
\newcommand{\al}{\alpha}
\newcommand{\bt}{\beta}
\newcommand{\Ga}{\Gamma}
\newcommand{\ga}{\gamma}
\newcommand{\la}{\lambda}
\newcommand{\de}{\delta}
\newcommand{\De}{\Delta}
\renewcommand{\Im}{\operatorname{Im}\,}
\newcommand{\erre}{\mathbb{R}} 
\newcommand{\n}{\noindent}
\newcommand{\Hd}{\hat H}
\newcommand{\hh}{{h}}
\newcommand{\rbar}{\hat{r}}
\newcommand{\dde}{\de^{\ve\,2}}
\newcommand{\hhd}{{\bf h}}
\newtheorem{theorem}{Theorem}
\newtheorem{lemma}{Lemma}
\newtheorem{proposition}{Proposition}
\theoremstyle{definition}
\theoremstyle{remark}
\author{Claudio Cacciapuoti} 
\address{Cacciapuoti: Czech Technical University, Doppler Institute}
\curraddr{B\v{r}ehov\'a 7, 11000 Prague, Czech Republic}
\email{cacciapuoti@ujf.cas.cz}
\author{Domenico Finco} 
\address{Finco: Department of Mathematics ``G. Castelnuovo", University of Rome ``La Sapienza''}
\curraddr{P.le A.Moro 2, 00185 Rome, Italy }
\email{finco@mat.uniroma1.it}
\title{Graph-like models for thin waveguides with Robin boundary conditions}
\begin{document}

\begin{abstract}
We discuss the limit of small width for the Laplacian defined on a waveguide with Robin boundary conditions. Under suitable hypothesis on the scaling of the curvature, we prove the convergence of the Robin Laplacian to the Laplacian on the corresponding graph. We show that the projections on each transverse mode generically give rise to decoupling conditions between the edges of the graph while exceptionally a coupling can occur. The non decoupling conditions are related to the existence of resonances at the thresholds of the continuum spectrum.

\end{abstract}

\maketitle

\section{Introduction}

The interest in the analysis of  differential operators on metric graphs has been driven by the idea that some physical systems can be well modeled by using lower dimensional approximations. 

Intuitively one expects that graph-like approximations can be used to describe the dynamics in constrained systems characterized by two scales of length: a ``large'' one along the direction of the edges and ``small'' ones in the transverse directions. 

In mathematics a metric graph is a one dimensional singular manifold, and is defined by assigning 
points, the vertices connected by
a set of oriented segments, the edges. The dynamics on the graphs is fixed by defining a differential (or pseudo-differential) operator on the graph and boundary conditions in the vertices. In the following we shall consider only self-adjoint operators. 

When this kind of structures arise as approximations of  quantum systems, it is customary to call them 
quantum graphs. In this case the self-adjointness assumption is natural.
One of the first and most famous applications of quantum graphs dates back to 1953, when they were used to model the dynamics of $\pi$ electrons in organic molecules \cite{RS53}. More recently a renewed interest in quantum graphs has resulted from the  development of nanotechnologies. At present devices based on carbon nanotubes and metallic nanowires are commonly produced and studied. In such structures the mean free path can reach hundreds of micrometers , while the transverse confinement can be of the order of ten nanometers. Because of their small dimensions and purity such devices represent an ideal framework for the analysis of many peculiar phenomena of quantum mechanics, see, e.g., \cite{Hur00} and \cite{LCM99}, and in many applications they can be treated as one dimensional systems. For a comprehensive review on properties and applications of quantum graphs we refer to \cite{Kuc02}, \cite{Kuc04}, \cite{Kuc05} and \cite{BCFK06}. Even in the field of classical mechanics there is a huge number of problems in which metric graphs define simplified but non trivial models. Typical examples arise in the analysis of  electromagnetic or acoustic waves in thin waveguides.

Many efforts have been done in the last fifteen years to understand to which extent a one dimensional dynamics on a metric graph approximates the  dynamics in a constrained system. Aim of this work is to investigate some relevant features of graph-like approximations for the quantum dynamics in networks of thin tubes.

To understand the core of the problem it is sufficient to discuss the case of a graph with one vertex and $N$ infinite edges. The natural Hilbert space for such a system is the direct sum of $N$ copies of $L^2((0,\infty))$ and its generic element is  $(f_1,\dots,f_N)$ with $f_j\in L^2((0,\infty))$. We restrict ourselves to a setting in which the dynamics on the graph is generated by an Hamiltonian that on each edge coincides with  the (positive) Laplacian. From the mathematical point of view one can define several self-adjoint operators on the graph that coincide with the Laplacian on the edges. Each element of this  family of operators is identified by the  boundary conditions in the vertex imposed on the functions in its domain. In the following we shall call the boundary conditions in the vertex \emph{gluing conditions}. A linear relation between $(f_1(0),\dots,f_N(0))$ and $(f_1'(0),\dots,f_N'(0))$ must be used to fix the gluing conditions in a way such that the corresponding operator on the graph is self-adjoint. Making use of Krein's theory one can characterize all the possible self-adjoint gluing conditions in the vertex. This was done by  Kostrykin and Schrader in \cite{KS99} (see also \cite{Har00}); for each vertex of degree $N$ there are $N^2$ real free parameters to fix the gluing conditions.

Two well known  examples of self-adjoint gluing conditions are
\begin{equation}
\label{deccon}
f_j(0)=0\qquad\forall j=1,\dots,N
\end{equation}
and
\begin{equation}
\label{freecon}
f_1(0)=f_2(0)=\dots=f_N(0)\,,\qquad\sum_{j=1}^Nf_j'(0)=0\,. 
\end{equation}
Condition \eqref{deccon} is usually called \emph{decoupling condition} or \emph{Dirichlet condition}. We shall use the expression \emph{decoupling condition} to keep in mind that this type of gluing in the vertex implies decoupling among the edges, i.e., no transmission trough the vertex is possible. Condition \eqref{freecon} is usually referred to as condition of  \emph{free type} or \emph{Kirchhoff type}. In our opinion the  expression condition of Kirchhoff type is a bit misleading. It recalls current conservation in electric circuits but as a matter of fact every self-adjoint gluing condition conserves the quantum probability  current across the vertex. The expression free condition seems more appropriate because \eqref{freecon} generalizes  the free one dimensional Laplacian to a non trivial topology.

In view of applications one is interested in understanding which gluing conditions are more appropriate to model  strongly constrained quantum systems. A standard strategy to approach this problem consists in the analysis of the limit, in some suitable sense, of the Laplacian  in a network of thin tubes as the network shrinks to the underlying graph. 

Let us assume that the network we consider is made up of  tubes that far from the vertex are straight and of constant width. To fix ideas we also suppose that the tubes have all the same width. In this  setting, far from the vertex, the dynamics is factorized in the direction along the axes of the tube and in the transverse direction. Moreover as the manifold shrinks the energy gap between the transverse modes increases as the inverse of the squared width of the tube. Even if this simple picture fails as one approaches to the vertex, it suggests that the natural way to reduce the dynamics  to one on the  underlying  graph is to project onto the transverse modes and that, in the limit of zero width, each projection can lead to a unitary dynamics, that is an effective
dynamics which leaves invariant the subspace associated to the transverse mode. 

The feasibility of this procedure and the corresponding limit operator on the graph depend on the boundary conditions that one imposes on the boundary of the tubes and on which transverse mode the projection is taken.

In this paper we discuss this problem in the most simple geometrical setting. We take as initial domain 
$\Omega$, a strip of constant width $d$ around a base curve $\Gamma$ and we assume that $\Gamma$ has
no self-intersections. Under this assumptions  the model is greatly simplified by the presence of a global system of coordinates
$(s,u)$ adapted to the geometry of the system, where $s$ is the arc length coordinate on $\Gamma$
and $u$ is the orthogonal coordinate. The relevant geometric quantity are $d$ and the scalar curvature
of the curve, $\ga (s)$, that in our hypothesis is a smooth and  compactly supported function. We rescale
the initial domain according to $d \longrightarrow \ve^a d$ and $\ga(s) \longrightarrow \ve^{-1} \ga(s/ \ve)$, where 
$\ve >0$ and $a$ is a big enough positive constant, and we obtain a net of  domains $\Omega^\ve$ which as $\ve\to0$ collapses onto a broken line, i.e., onto a graph made up of one vertex and two infinite edges.
We consider the Laplacian on $\Omega^\ve$ with Robin boundary conditions
and discuss its convergence to the Laplacian on the limit domain which
can be seen as the most simple example of graph.

The most relevant technical part of the present paper consists in the proof that  the projection of the dynamics
on the $n$-th transverse mode is unitary  in the limit $\ve \to 0$, and that the effective dynamics is
described by the Hamiltonian
\begin{equation*}
h_n^\ve = -\frac{d^2}{ds^2} + \frac{\beta_n }{\ve^2} \ga^2 (\cdot / \ve)\,,
\end{equation*}
where $\beta_n$ are some coefficients related to the energy of the transverse mode.

The analysis of the limit for $\ve\to0$ of Hamiltonians of the form of $h_n^\ve$ was performed in the former work by the same authors \cite{ACF07}. For each $n$ the limit operator depends on the low energy properties of the Hamiltonian $h_n$
\begin{equation*}
h_n = -\frac{d^2}{ds^2} + {\beta_n } \ga^2\,. 
\end{equation*}
More precisely, under our assumptions on $\ga(s)$ two cases can occur
\begin{enumerate}
\item There does not exist a zero energy resonance\footnote{See section \ref{sec3} for the definition of zero energy resonance.} for $h_n$, then $h_n^\ve$ converges to the Laplacian on the graph with decoupling gluing conditions in the vertex.
\item There exists a zero energy resonance $f_{r,n}$, for $h_n$. In such a case one can define two real constants $c_{\pm} = \lim_{s \to \pm \infty} f_r$, such that $c_+^2+c_-^2=1$, and the limit operator on the graph is the Laplacian with gluing conditions given by
\begin{equation}
\label{scaleinv}
c_- f_1(0) = c_+ f_2(0)\,,\qquad
c_+ f_1' (0 ) +c_- f_2' (0)=0\,.
\end{equation}
\end{enumerate}
The convergence has to be intended in the norm resolvent sense.

We remark that the existence of a zero energy resonance is an exceptional event, that in our case is related to some very special choices of the  initial curve $\Gamma$, see \cite{ACF07} for few examples. This implies that in most of the cases the limit operator is defined by decoupling conditions in the vertex. For this reason we call the  case 1 \emph{generic} and the case 2 \emph{non-generic}.

The gluing conditions \eqref{scaleinv} are known in literature as scale invariant, see \cite{HC06}, and are parameterized by one independent real parameter, e.g., the ratio $c_+/c_-$. By using the result proved in \cite{CE07} we shall see that, in the non-generic case, a deformation of order $\ve$ of the angle $\theta$ between the edges of the graph leads to a more general class of gluing conditions, i.e., the ones defined by
\begin{equation*}
c_- f_1(0) = c_+ f_2(0 )\, ,\qquad
c_+ f_1' (0 ) +c_- f_2' (0)=\hat b(c_+ f_1(0)+c_- f_2(0))
\end{equation*}
where $\hat b$ is a real constant related to the deformation parameter.

It is worth noticing that in the non-generic case the gluing conditions in the vertex implies a coupling between the edges; our analysis includes a wide class of boundary conditions and holds for all transverse modes.

The first results in the same direction presented here, in a setting with several edges,  concerned the case of compact networks of tubes\footnote{For compact network we mean a network that  is contained in a compact region of the space. For such a network and for the corresponding underlying graph the spectrum of the Laplacian is discrete.} with Neumann boundary conditions. In \cite{FW93}  was proved the  convergence of the solutions of the heat equation on the network to the corresponding solution of the equation on the graph. The proof made use of some results on the convergence of Markov processes proved in the same paper. In a similar setting the convergence of the spectrum was proved in \cite{KZ01}, \cite{RS01}, \cite{Sai01}, \cite{KS03} and \cite{EP05}. In all these works the gluing conditions arising in the limit are of free type. The most recent result on the Neumann problem (and in a setting in which the approximating manifold has no boundary), was given in  \cite{Pos06}. In the latter work in the case of compact and non compact networks, was proved the strong resolvent convergence to the operator on the graph with  gluing conditions in the vertex  of free type. In all
the aforementioned papers, only the projection onto the lowest transverse mode was considered.

The case we just described is the most simple one. In this setting the energy of the lowest transverse mode is equal to zero and,  in a neighborhood  of the vertex, one is allowed to approximate the wave function on the network  with a constant function.
The case  with Dirichlet boundary conditions has revealed much  more tricky. This is due to the fact that in the latter case the energy of the transverse modes always  increases as the inverse squared width of the tubes and this forces to rescale the Hamiltonian by subtracting the divergent energy term. The scaled operator has  a finite number of eigenvalues which all diverge as the network squeezes. 

A first result on the problem with Dirichlet boundary was given in \cite{Pos05}. In a setting of a compact network it was proved that the limit operator on the graph is characterized by decoupling conditions in the vertex. It is worth noticing that in this work the decoupling was  obtained as a consequence of an ad hoc hypothesis on the volume of the manifold in a neighborhood of the vertex.

Recently D. Grieser \cite{Gri07pp} has proved that, for a large class of boundary conditions, generically the limit gluing conditions are of decoupling type. This was already argued  by S. Molchanov and B. Vainberg, see \cite{MV07pp}. In these works, for compact networks, the spectral convergence of the Laplacian on the network to the operator on the graph is studied; the approach is based on the analysis of the scattering problem associated to the network of tubes and makes use of the analytic properties of the resolvent of the Laplacian on the manifold. The special cases in which the coupling occurs are related to the existence of singularity of the resolvent at the thresholds of the energy of the transverse modes.

Earlier the existence of a non-decoupling limit in the Dirichlet case was proved in \cite{ACF07} for the same model discussed in this paper. 
Such a model was proposed for the first time as a prototype of a Dirichlet network collapsing onto a graph in \cite{DT06} where 
the generic case leading to decoupling conditions was discussed.

In view of applications the most relevant type of boundary conditions in the modeling of constrained quantum mechanical systems  is the  Dirichlet one while  Neumann boundary conditions arise mostly in the case of electromagnetic or acoustic waveguides. Robin conditions are used for example in numerical simulations to model the interface between semiconductors, see, e.g., \cite{Sel84}.

The interplay between geometry and boundary conditions in waveguides has been studied in many works (see, e.g., \cite{ES89}, \cite{DE95}, \cite{DK02}, \cite{BMT07}, \cite{FK08} and references therein). Most of them focus on the differences between Neumann and Dirichlet boundary. By changing the Robin constant one can continuously switch from Neumann to Dirichlet boundary conditions. For this reason we guess that the analysis carried on in this paper can help to gain a deeper understanding in this problem.  

Let us stress that we consider only ``symmetric'' waveguides, i.e., we take the same boundary conditions on the upper and lower boundary of the waveguide. Asymmetric boundary conditions would lead to a divergent term  in the transverse Hamiltonian of the order of the inverse of the width of the waveguide. Some results on the spectral properties of the Laplacian in asymmetric waveguides can be found in \cite{Kre08pp} and \cite{KK05}.

The present paper is structured as follows. In section \ref{sec1} we  define  the Laplacian with symmetric Robin boundary conditions on  the waveguide and introduce the correct scaling to get a family of waveguides that collapses onto a graph. In section \ref{sec3} as preliminary results, we discuss the spectral structure of the one dimensional Laplacian on a compact interval with Robin boundary conditions and we recall some results on the limit of Hamiltonians with short range scaled potentials taken from \cite{ACF07}. After that we state the main theorem. Section \ref{sec4} is devoted to the proof of the main theorem. In section \ref{sec5} we discuss the effect of small deformations of the relevant parameters of the problem. In this section we make  use of the result proved in \cite{CE07} where the same problem was studied in the case of a Dirichlet boundary. A section of conclusions and remarks closes the paper. The proofs of few technical estimates and of a resolvent formula that will be used to prove the main theorem are postponed in appendix \ref{appendice}.

\section{The model \label{sec1}}
\setcounter{equation}{0}

Let $\Ga$ be a curve in $\erre^2$ given in parametric form by $\Ga:= \{ ( \ga_1(s), \ga_2(s) ), s\in \erre \}$
and let us assume that it is parameterized by the arc length $s$, i.e. $ \ga'_1(s)^2+ \ga'_2(s)^2=1$. The curve $\Ga$ is completely defined up to isometries once the signed curvature $\ga$ is known
\begin{equation*}
\gamma(s):=\gamma'_{2}(s)\gamma''_{1}(s)
-\gamma'_{1}(s)\gamma''_{2}(s)\,;
\end{equation*}
\n the curvature radius of $\Ga$ in $s$
 is equal to the inverse of the modulus of the signed curvature.

We shall assume that $\ga(s)\in C_0^{\infty}(\RE)$, therefore $\Ga$ is a straight line
outside a compact region. We shall also assume that $\Ga$ has no self-intersections. Thus
$\Ga$ consists of two straight lines, $l_1$ and $l_2$, with the
origins, $O_1$ and $O_2$, connected by an infinitely smooth, non self-intersecting, curve $C$, running in a compact region. The integral of
$\ga$ gives the angle $\theta  $  
between  $l_1$ and $l_2$
\begin{equation}
\label{theta}
\theta=\int_\RE\ga(s)ds\,.
\end{equation}

Let us denote the open strip of width $2d$ around $\Ga$ by $\Omega$:
\begin{equation*}
\Omega:= \{ (x,y)\,\,s.t.\,\,  x= \ga_1(s) -  u \ga_2'(s), y= \ga_2(s) +  u \ga_1'(s), s\in\erre, u\in(-d,d) \}\,.
\end{equation*}
We assume $\sup_s |\ga(s)|d < 1$, in this way $(s,u)$ provide a global
system of coordinates in $\Omega$.

Let us define the sesquilinear form $\QQ_\Omega$ on $L^2(\Omega)\times L^2(\Omega)$ with domain
\begin{equation*}
\DD(\QQ_\Omega):=H^1(\Omega)\times H^1(\Omega)\,,
\end{equation*}
given by
\begin{equation*}
\QQ_\Omega[\varphi,\psi]:=\int_\Omega dx\,dy
\overline{\nabla \varphi}\,\nabla\psi \,.
\end{equation*}
It is well known that $\QQ_\Omega$ is closed and positive and that the associated self-adjoint operator is
the Laplacian in the domain $\Omega$ with Neumann boundary conditions.

Now we
consider the following perturbation of $\QQ_\Omega$ depending on $\al\in\RE$
\begin{equation*}
\QQ_\Omega^R[\varphi,\psi]:=\int_\Omega dx\,dy
\overline{\nabla \varphi}\,\nabla\psi \,+\al\int_{\partial\Omega}\big(\overline\varphi\,\psi\big)\big|_{\partial\Omega}\,dS
\end{equation*}
where $dS$ is the Lebesgue induced measure on $\partial \Omega$.
Then by Sobolev embedding theorems (see, e.g., \cite{Ada75}), $\QQ_\Omega^R$ is a small perturbation
of $\QQ_\Omega$ in the sense of quadratic forms and $\QQ_\Omega^R$ is closed and bounded from below on
\begin{equation*}
\DD(\QQ_\Omega^R):=H^1(\Omega)\times H^1(\Omega)\,.
\end{equation*}
Moreover $C_0^\infty(\RE^2)\times C_0^\infty(\RE^2)$ is a core for $\QQ_\Omega$ and $\QQ_\Omega^R$ (see, e.g., \cite{RSII} Th. X.17). 

We denote by $-\Delta^R_{\Omega}$ the self-adjoint operator associated to $\QQ_\Omega^R$. One can verify that the operator
$-\Delta^R_{\Omega}$ coincides with the Laplacian with Robin boundary conditions on
$\partial\Omega$, i.e.,  functions 
in $\DD(-\Delta_\Omega^R)$ belong to $H^2 (\Omega)$
and their trace on $\partial \Omega$ satisfies the boundary condition 
$\pd{\psi}{n}|_{\partial\Omega}+\al\psi|_{\partial \Omega}=0$.

We put
\begin{equation}
\label{DD}
\DD := \lf\{ \psi \in H^2(\Omega)\; s.t.\;\pd{\psi}{n}\bigg|_{\partial\Omega}+\al\psi\big|_{\partial \Omega}=0 \ri\}
\end{equation}
and let us recall that by the first representation theorem of quadratic forms, see \cite{Kat80}, we have
\begin{equation*}
\DD(-\Delta_\Omega^R):=
\lf\{\psi\in\DD(\QQ_\Omega^R)\;s.t.\;\exists\chi\in L^2(\Omega),\forall \varphi\in\DD(\QQ_\Omega^R)\,,\; \QQ_\Omega^R[\varphi,\psi]=(\varphi,\chi)_{L^2(\Omega)}\ri\}\,.
\end{equation*}
Integrating by parts we have immediately that $\DD \subset \DD(-\Delta_\Omega^R)$. Now we prove 
the reverse inclusion and then equality follows.
Let us assume that $\psi \in \DD(-\Delta_\Omega^R)$ then there exists $\chi\in L^2(\Omega)$ such that 
\begin{equation*}
\QQ_\Omega^R[\varphi,\psi]=(\varphi,\chi)_{L^2(\Omega)}
\end{equation*}
for all $ \varphi \in C_0^{\infty} (\Omega)$. For such a $\varphi$ we simply have 
\begin{equation*}
\QQ_\Omega^R[\varphi,\psi]= \int_{\Omega}  dx\,dy \overline{\nabla \varphi}\,\nabla\psi
\end{equation*}
and then
\begin{equation*}
\int_{\Omega} dx\,dy \;\overline{\varphi}\chi =
\int_{\Omega} dx\,dy\overline{\nabla \varphi}\,\nabla\psi=
\int_{\Omega} dx\,dy\;\overline{\varphi}\,(-\Delta\psi)
\end{equation*}
which implies that $-\Delta\psi \in L^2(\Omega)$ that is $\psi \in H^2(\Omega)$. 
Now we take $\varphi  \in\DD(\QQ_\Omega^R)$ and using Gauss-Green theorem, we find
\begin{equation*}
\QQ_\Omega^R[\varphi,\psi]=
\int_{\Omega}  dx\,dy \;\overline{\varphi}\,(-\Delta\psi) +
\int_{\partial \Omega} dS \; \overline{\varphi}\big|_{\partial \Omega}
\lf[
\pd{\psi}{n}\bigg|_{\partial\Omega}+\al\psi\big|_{\partial \Omega}
\ri]\,.
\end{equation*}
This is a bounded functional with respect the $L^2(\Omega)$ topology in $\varphi$ if and only 
if the boundary conditions $\pd{\psi}{n}|_{\partial\Omega}+\al\psi|_{\partial \Omega}=0$ hold
and then $\DD(-\Delta_\Omega^R) = \DD$.

In order to study the properties of $-\Delta_{\Omega}^R$ it is convenient to use the coordinates $(s,u)$
which belong to $ \Omega' = \erre \times (-d,d)$. The following proposition holds true.
\begin{proposition}
\label{proproblap}
For $\ga\in C_0^\infty(\RE)$, $-\Delta_\Omega^R$ is unitarily equivalent to the operator $H$
in $L^2(\Omega', ds\,du)$ defined by
\begin{equation}
\label{domH}
\DD(H):=\bigg\{\psi\in H^2(\Omega')\,s.t.\,
\pd{\psi}{u} (s ,  d ) + \al_1 (s) \psi (s , d ) =0\;,
-\pd{\psi}{u} (s , - d ) + \al_2 (s) \psi (s , -d ) =0
\bigg\}
\end{equation}
where
\begin{equation}
\label{al1al2}
\al_1(s):=\al-\frac{\ga(s)}{2(1+d\ga(s))}\;,\quad
\al_2(s):=\al+\frac{\ga(s)}{2(1- d\ga(s))}\,,
\end{equation}
and
\begin{equation*} 
H:=
-\pd{}{s} \frac{1}{(1+u\gamma(s))^2}\pd{}{s}-
\pd{^2}{u^2}+V(s,u)
\end{equation*}
\n with
\begin{equation}
 \label{potV}
V(s,u):=-\frac{\gamma(s)^2}{4(1+u\gamma(s))^2}
+\frac{u\gamma''(s)}{2(1+u\gamma(s))^3}
-\frac{5}{4}\frac{u^2\gamma'(s)^2}{(1+u\gamma(s))^4}\,.
\end{equation}
\end{proposition}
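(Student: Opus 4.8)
The plan is to realize $U$ as a composition $U=U_2U_1$ of two elementary unitaries and to verify $H=U(-\Delta_\Omega^R)U^{*}$ by direct computation. First I would introduce the coordinate map $\Phi:\Omega'\to\Omega$, $\Phi(s,u)=(\ga_1(s)-u\ga_2'(s),\,\ga_2(s)+u\ga_1'(s))$. Differentiating $\ga_1'^2+\ga_2'^2=1$ and comparing with the definition of the signed curvature gives the Frenet relations $\ga_1''=\ga\,\ga_2'$ and $\ga_2''=-\ga\,\ga_1'$, whence $\partial_s x=\ga_1'(1+u\ga)$, $\partial_s y=\ga_2'(1+u\ga)$, $\partial_u x=-\ga_2'$, $\partial_u y=\ga_1'$. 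The induced metric is therefore diagonal,
\begin{equation*}
g=\operatorname{diag}\big((1+u\ga)^2,\,1\big),\qquad \sqrt{\det g}=m,\qquad m:=1+u\ga(s).
\end{equation*}
The hypothesis $\sup_s|\ga(s)|\,d<1$ makes $m>0$ on $\Omega'$, so $\Phi$ is a diffeomorphism onto $\Omega$. Consequently $U_1 w:=w\circ\Phi$ is unitary from $L^2(\Omega,dx\,dy)$ onto $L^2(\Omega',m\,ds\,du)$ and $U_2 v:=m^{1/2}v$ is unitary from $L^2(\Omega',m\,ds\,du)$ onto $L^2(\Omega',ds\,du)$.

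The heart of the argument is the evaluation of $H\psi=-m^{1/2}\Delta_g(m^{-1/2}\psi)$, where $\Delta_g=\frac1m\partial_s\big(\frac1m\partial_s\big)+\frac1m\partial_u\big(m\,\partial_u\big)$ is the Laplace--Beltrami operator in the coordinates $(s,u)$, namely the image of $\Delta_\Omega$ under $U_1$. Conjugating by $m^{1/2}$ flattens the weight and, through repeated use of $\partial m^{1/2}=\tfrac12 m^{-1/2}\partial m$, $\partial m^{-1/2}=-\tfrac12 m^{-3/2}\partial m$ together with $\partial_u m=\ga$, $\partial_s m=u\ga'$ and $\partial_s^2 m=u\ga''$, produces the principal parts $-\partial_u^2$ and $-\partial_s\frac{1}{m^2}\partial_s$ plus exactly the three zeroth-order terms of $V$: the $u$-sector contributes $-\frac{\ga^2}{4m^2}$ and the $s$-sector contributes $\frac{u\ga''}{2m^3}-\frac54\frac{u^2\ga'^2}{m^4}$, in agreement with \eqref{potV}. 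This is the effective potential generated by straightening the curved strip, and I expect the (purely mechanical) bookkeeping of all the first- and second-derivative terms in this conjugation to be the main obstacle of the proof.

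It remains to transport the boundary condition. Since $g_{su}=0$ and $g_{uu}=1$, the coordinate $u$ is arc length along the normal and $dS=m\,ds$ on $\partial\Omega$; hence on the two components $u=\pm d$ one has $\pd{}{n}=\pm\partial_u$, so $U_1$ carries $\pd{\psi}{n}+\al\psi=0$ into $\pm\partial_u v+\al v=0$. Writing $v=m^{-1/2}\psi$ and using $\partial_u(m^{-1/2}\psi)=m^{-1/2}(\partial_u\psi-\frac{\ga}{2m}\psi)$ shifts the Robin constant, giving $\partial_u\psi+(\al-\frac{\ga}{2m})\psi=0$ at $u=d$ and $-\partial_u\psi+(\al+\frac{\ga}{2m})\psi=0$ at $u=-d$; evaluating $m$ at $u=\pm d$ reproduces the coefficients $\al_1,\al_2$ of \eqref{al1al2}.

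To secure self-adjointness and the precise domain \eqref{domH} without separately proving that $U$ maps $H^2(\Omega')$ onto itself, I would instead transport the closed form $\QQ_\Omega^R$. Under $U_1$ it becomes $\int_{\Omega'}\big(m^{-1}|\partial_s v|^2+m\,|\partial_u v|^2\big)\,ds\,du+\al\int_{u=\pm d}|v|^2\,m\,ds$ with $v=w\circ\Phi$, and after the substitution $v=m^{-1/2}\psi$ (the action of $U_2^{*}$) an integration by parts in $u$ and in $s$ simultaneously yields the bulk operator $H$ and, as natural boundary conditions, exactly \eqref{domH}: the potential $V$ and the shifted constants $\al_1,\al_2$ emerge together from the boundary terms of this integration by parts. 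The first representation theorem then identifies the self-adjoint operator associated with the transported form with $H$ on the domain \eqref{domH}, completing the unitary equivalence.
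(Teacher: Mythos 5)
Your proposal is correct and takes essentially the same route as the paper: the same factorized unitary (change to the coordinates $(s,u)$ followed by multiplication by $(1+u\ga)^{1/2}$), the same computation producing $V$ and the shifted constants $\al_1,\al_2$, and the same form-level identification of the domain \eqref{domH} via integration by parts and the first representation theorem. The only cosmetic difference is that the paper carries out the calculation on the core $C_0^\infty(\RE^2)\times C_0^\infty(\RE^2)$ and then takes the closure (showing the transformed form is a small perturbation of one equivalent to the $H^1(\Omega')$ norm), whereas you transport the closed form directly.
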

\begin{proof}
We denote by $\widetilde{\QQ}^R_{\Omega}$ the sesquilinear form obtained by restricting $\QQ_\Omega^R$ to $C_0^\infty(\RE^2)\times C_0^\infty(\RE^2)$; the closure of $\widetilde{\QQ}^R_{\Omega}$ 
is $\QQ_\Omega^R$. The Hilbert space $L^2(\Omega, dx\,dy)$ is mapped by the change of variables 
into $L^2(\Omega',(1+u\ga)ds\,du)$. With these coordinates the form $\widetilde{\QQ}^R_{\Omega}$ reads
\begin{equation*}
\begin{aligned}
\widetilde{\QQ}^R_{\Omega}[\varphi,\psi]=&
\int_{\Omega'}\bigg(
\frac{1}{1+u \ga} \overline{\pd{\varphi}{s}}\pd{\psi}{s}+
(1+u \ga )\overline{\pd{\varphi}{u}}\pd{\psi}{u}\bigg) ds\,du \\
&+\al\int_\RE\big[(1+d \ga(s))\overline{\varphi}(s,d)\psi(s,d)+(1-d \ga(s))\overline{\varphi}(s,-d)\psi(s,-d))\big]ds\,.
\end{aligned}
\end{equation*}
Notice that even in the new coordinates the domain of $\widetilde \QQ^R_\Omega$ is  $C_0^\infty(\RE^2)\times C_0^\infty(\RE^2)$. We consider the unitary map 
$U:L^2(\Omega',(1+u\ga)ds\,du) \to L^2(\Omega', ds\,du)$ given by
\begin{equation}
\label{unitarymap}
(U\psi)(s,u):=(1+u\ga(s))^{1/2}\,\psi(s,u)\,.
\end{equation}
A straightforward calculation shows that the  form $\widetilde{\QQ}^R_{\Omega}$ is unitarily equivalent to the form $\widetilde{Q}^R_{\Omega'}$ in 
$L^2(\Omega', ds\,du)$ defined by
\begin{equation*}
\DD(\widetilde{Q}^R_{\Omega'}):=C_0^\infty(\RE^2)\times C_0^\infty(\RE^2)\,,
\end{equation*}
\begin{equation*}
\begin{aligned}
\widetilde{Q}^R_{\Omega'}[\varphi,\psi]:=&
\int_{\Omega'}\bigg(
\frac{1}{(1+u \ga)^2} \overline{\pd{\varphi}{s}}\pd{\psi}{s}+
\overline{\pd{\varphi}{u}}\pd{\psi}{u}+V\,\overline{\varphi}\,\psi\bigg) ds\,du \\
&+\int_\RE(\al_1(s)\overline{\varphi}(s,d)\psi(s,d)+\al_2(s)\overline{\varphi}(s,-d)\psi(s,-d))ds
\end{aligned}
\end{equation*}
where $\al_1(s)$ and $\al_2(s)$  are given by equation \eqref{al1al2} and $V$ by equation \eqref{potV}. Now we just need to compute 
$Q^R_{\Omega'}$, the closure of $\widetilde{Q}^R_{\Omega'}$. Let us consider
\begin{equation*}
\widetilde{Q}_{\Omega'}[\varphi,\psi]=
\int_{\Omega'}\bigg(
\frac{1}{(1+u \ga)^2} \overline{\pd{\varphi}{s}}\pd{\psi}{s}+
\overline{\pd{\varphi}{u}}\pd{\psi}{u} \bigg)ds\,du
\end{equation*}
with the same domain as $\widetilde Q^R_{\Omega'}$. Since $0<c <(1+ d \ga) < c^{-1}$ for some positive constant $c$, we notice that $\widetilde{Q}_{\Omega'}[\psi,\psi]$ is equivalent to the $H^1$ norm
in $\Omega'$. Since $V$ and $\al_i$ are bounded, then  
$\widetilde{Q}^R_{\Omega'}$ is a small perturbation
of $\widetilde{Q}_{\Omega'}$ and its closure is given by
\begin{equation*}
\DD({Q}^R_{\Omega'}):=H^1(\Omega')\times H^1(\Omega')
\end{equation*}
\begin{equation*}
\begin{aligned}
{Q}^R_{\Omega'}[\varphi,\psi]:=&
\int_{\Omega'}\bigg(
\frac{1}{(1+u \ga)^2} \overline{\pd{\varphi}{s}}\pd{\psi}{s}+
\overline{\pd{\varphi}{u}}\pd{\psi}{u}+V\,\overline{\varphi}\,\psi\bigg) ds\,du \\
&+\int_\RE(\al_1(s)\overline{\varphi}(s,d)\psi(s,d)+\al_2(s)\overline{\varphi}(s,-d)\psi(s,-d))ds\,.
\end{aligned}
\end{equation*}
It is easy to repeat the argument  used before  to prove that  the domain of $-\Delta_{\Omega}^R$ is equal to $\mathscr D$ defined in \eqref{DD}, to see that the self-adjoint operator associated with ${Q}^R_{\Omega'}$ coincides with the operator $H$ given by  \eqref{domH} - \eqref{potV}.
\end{proof}

From now on we denote $L^2(\Omega', ds\,du)$ simply by $L^2$.

In order to get a family of waveguides that collapses onto a prototypical graph, we rescale the geometric parameters of the system $\ga$ and $d$ in the following way:
\begin{align}
\ga(s)& \longrightarrow \,\,\frac{1}{\ve} \ga\lf( \frac{s}{\ve} \ri) 
\label{scalinggamma1}
\\
d &\longrightarrow \,\,\de^{\ve} d \,.
\label{scalingd}
\end{align}
Where $\ve>0$ and $\de^\ve>0$ are two dimensionless scaling  parameters,  such that $\de^\ve\to 0$ as $\ve\to 0$. With this scaling we have two characteristic scales: $\ve$ is the scale of variation of the curvature $\ga$ 
while  $\de^\ve$ is the intrinsic scale  in the transverse direction.
We shall consider the regime where $\de^\ve \ll \ve$, that is, the curvature is slowly
varying with respect to the width of the waveguide. In particular  we assume $\de^\ve:=\ve^a$ with $a\geqslant 1$ big enough, a more precise statement on the ratio $\de^\ve/\ve$ will be given in the following. Notice that the angle $\theta$ defined in \eqref{theta} is unchanged by  the scaling \eqref{scalinggamma1}.

We also rescale the Robin constant $\al$ in the following way
\begin{equation*}
\al \longrightarrow \al^\ve:=\frac{\al}{\de^\ve}\,,
\end{equation*}
in this way the Robin boundary conditions in \eqref{DD} are invariant under the scaling of $d$.

We obtain a family of domains $\Omega^{\ve}$ and of operators $-\Delta^R_{\Omega^{\ve}}$
such that $\Omega^{\ve}$ approximates, for $\ve \rightarrow 0$, the
broken line of angle $\theta$ made up of two half lines, $l_1$ and
$l_2$, with the same origin, $O_1\equiv O_2\equiv O$. Since we assume $a \geqslant 1$
then $(s,u)$ are a system of global coordinates also for $\Omega^{\ve}$.

From proposition \ref{proproblap} it  follows that, for all $\ve>0$, the operator $-\Delta^R_{\Omega^{\ve}}$ is unitarily equivalent
to the self-adjoint operator $H^{\ve}:\DD(H^\ve)\subset L^2\to L^2$ given by
\begin{equation}
\label{Hve}
H^{\ve}:=-
\pd{}{s} \frac{1}{(1+u \eta^\ve(s))^2}\pd{}{s}-
 \frac{1}{\de^{\ve\, 2}} \pd{^2}{u^2} +\frac{1}{\ve^2} V^\ve(s,u)\,,
\end{equation}
\n with
\begin{equation*}
V^\ve(s,u):=-\frac{\gamma(s/\ve)^2}{4(1+u \eta^\ve(s))^2}
+\frac{\de^\ve/\ve \, u\gamma''(s/\ve)}{2(1+u \eta^\ve(s))^3}
-\frac{5}{4}\frac{(\de^\ve / \ve)^2 u^2\gamma'(s/\ve)^2}{(1+u \eta^\ve(s))^4}
\end{equation*}
and with domain
\begin{equation}
\label{robinstorto}
\DD(H^\ve):=\bigg\{\psi\in H^2(\Omega')\,s.t.\,
 \pd{\psi}{u} (s , d ) + \al^{\ve}_1 (s) \psi (s ,d ) =0\;,
-\pd{\psi}{u} (s ,- d ) + \al^{\ve}_2 (s) \psi (s , -d ) =0
\bigg\}\,,
\end{equation}
here  $\al_1^\ve (s)$ and $\al_2^\ve (s)$ are given by
\begin{equation}
\label{alves}
\al_1^\ve (s):=\al-\frac{\eta^\ve (s)}{2(1+d \eta^\ve(s))}\;,\quad
\al_2^\ve (s):=\al+\frac{\eta^\ve (s)}{2(1- d \eta^\ve(s))}
\end{equation}
and we have introduced $\eta^\ve(s): = \de^\ve / \ve \, \ga(s /\ve)$.  In the following
$\eta^\ve$ will play the role of a small quantity in a suitable topology  and will allow a perturbative
analysis.

\section{Main result\label{sec3}}
\setcounter{equation}{0}

Equation \eqref{Hve} shows that the transverse kinetic energy is divergent in the limit $\ve \to 0$.
This is a common problem for these kind of singular limits. In order to overcome this problem, it is convenient
to introduce an $s$ dependent orthonormal complete set of states of $L^2((-d,d))$ which diagonalizes   the transverse
part of the Hamiltonian and provides a useful framework to discuss the limit of $H^\ve$ in the sense roughly described in the introduction. To this aim we start this section  with a short discussion on the one dimensional Robin Laplacian in $L^2((-d,d))$.

Given two  real constants $\al_1$ and $\al_2$, we denote by $\hhd_{\al_1 , \al_2}$ the Robin Laplacian on $L^2((-d,d))$; $\hhd_{\al_1 , \al_2}$ is the self-adjoint operator defined as
\begin{equation*}
\DD(\hhd_{\al_1, \al_2}):=\{\psi\in H^2((-d,d))\,s.t.\,\,
\psi'(d)+\al_1 \psi(d)=0\; , \,
-\psi'(-d)+\al_2 \psi(-d)=0 \}
\end{equation*}
\begin{equation*}
\hhd_{\al_1, \al_2}\psi:=-\frac{d^2\psi}{du^2}\quad\forall\psi\in\DD(\hhd_{\al_1, \al_2})\,.
\end{equation*}
Let us denote by $g_{\al_1, \al_2}(k^2):=( \hhd_{\al_1 , \al_2 } - k^2 )^{-1}$ the resolvent of $\hhd_{\al_1 , \al_2 }$.
The integral kernel of $g_{\al_1, \al_2}(k^2)$ is explicitly known:
\begin{equation}
\label{reso}
\begin{aligned}
g_{\al_1, \al_2}(k^2;u,u')=&  \frac{1}{2k d^2}\frac{\sin[k(2d-|u-u'|)]}{\cos(2kd)}\\
&-\frac{k(\al_1-\al_2)\sin[k(u+u')]-(\al_1\al_2+k^2)\cos[k(u+u')]}{2k d^2
[(\al_1\al_2-k^2)\sin(2kd)+k(\al_1+\al_2)\cos(2kd)]}\\
&-\frac{(\al_1\al_2-k^2)\cos[k(u-u')]}{2k d^2\cos(2kd)[(\al_1\al_2-k^2 )\sin(2kd)+k(\al_1+\al_2)\cos(2kd)]}
\end{aligned}
\end{equation}
for  $k^2\in\rho(\hhd_{\al_1,\al_2})$  and $\Im k\geqslant 0$ where $\rho(\hhd_{\al_1,\al_2})$ denotes the resolvent set of $\hhd_{\al_1,\al_2}$.\\

We denote by $\la_n$, $n=0,1,2, \ldots$, the eigenvalues of $\hhd_{\al_1, \al_2}$ arranged in  increasing order. Using \eqref{reso} it is straightforward  to prove that $\la_n=k_n^2$ with $k_n$ given by the solutions of 
\begin{equation}
(\al_1\al_2-k_n^2)\sin(2k_n d)+k_n(\al_1+\al_2)\cos(2k_n d)=0,
\label{eigen}
\end{equation}
positive eigenvalues correspond to $k_n\in\RE^+$, while  negative eigenvalues are given by $k_n\in i\RE^+$.  The corresponding eigenfunctions have the form
\begin{equation}
\label{phin}
\phi_n(u)=A_n \sin(k_n u)+B_n \cos(k_n u)\qquad  n=0,1,2,\dots\,,
\end{equation}
where $A_n$ and $B_n$ are suitable coefficients. The eigenfunctions $\phi_n$ can be chosen real. When we want to stress the dependence on the boundary conditions,  we shall denote the eigenvectors of $\hhd_{\al_1, \al_2}$ by $\phi_n(\underline{\al} , u)$, where $\underline{\al} \equiv (\al_1 , \al_2)$.\\

We simply denote $\hhd_{\al, \al}$ by $\hhd_{\al}$ and by $\mu_n$, $n=0,1,2, \ldots$, its eigenvalues
 arranged in increasing order. The eigenvalues of $\hhd_\al$ can be written as $\mu_n = p_n^2$ with $p_n$ satisfying:
\begin{align}
\label{eigvaleven}
&p_n \sin (p_n d) -\al\cos (p_n d)=0 \qquad n=0,2,4,\dots\\
\label{eigvalodd}
&p_n \cos ( p_n d) +\al\sin (p_n d) =0\qquad n=1,3,5,\dots\,.
\end{align}
 Positive eigenvalues correspond to $p_n\in\RE^+$, while  negative eigenvalues are given by $p_n\in i\RE^+$.
 The corresponding eigenvectors have now a definite parity and can be written as
\begin{align}
&\xi_n(u)=N_n\cos(p_n u)\qquad  n=0,2,4,\dots
\label{xin1}
\\
&\xi_n(u)=N_n\sin(p_n u)\qquad  n=1,3,5,\dots\,,
\label{xin2}
\end{align}
where $N_n$ is the normalization constant.
For $\al \geqslant 0$ all the eigenvalues are non negative, for $-1 \leqslant \al d < 0$ there is one negative eigenvalue and for $ \al d <-1$ there are two negative eigenvalues. In figure \ref{fig1} the first four eigenvalues of $\hhd_\al$ are plotted as functions of $\al$ (for $d=1$). 
\begin{figure}[h!]
\begin{center}
\includegraphics[width=8cm,angle=-90]{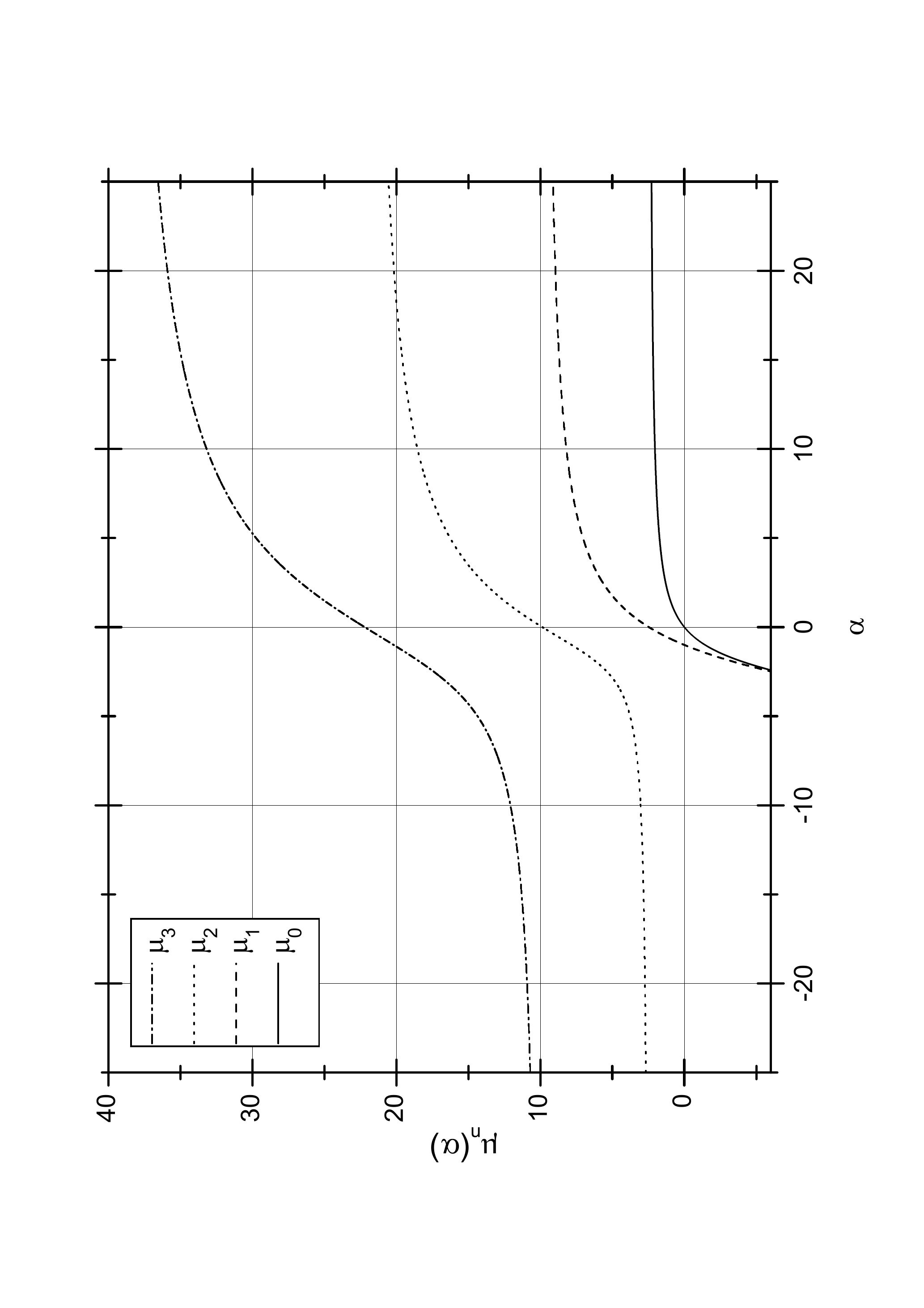}
\caption{
\label{fig1}
Plot of the first four eigenvalues of $\hhd_\al$ as functions of $\al$, in the plot it is assumed $d=1$.}
\end{center}
\end{figure}

Let us now take $\al_1$ and $\al_2$ in $\hhd_{\al_1,\al_2}$ of the following form:
\begin{equation*}
\al_1:=\al-\frac{\eta}{2(1+d\eta)}\;,\quad
\al_2:=\al+\frac{\eta}{2(1- d\eta)}\,.
\end{equation*}
For $\eta \ll 1$ the Hamiltonian $\hhd_{\al_1,\al_2}$ can be considered as a perturbation of $\hhd_\al$.
The eigenvalues of $\hhd_{\al_1,\al_2}$, $\la_n=k_n^2$, are  defined by the equation $\Delta(k_n, \eta)=0$ with
\begin{equation*}
\begin{aligned}
\De(k_n,\eta):=&
\bigg[\bigg(\al+\frac{\eta}{2(1-d\eta)}\bigg)\bigg(\al-\frac{\eta}{2(1+d\eta)}\bigg)-k_n^2\bigg]\sin(2k_n d)\\
&+k_n\bigg[\bigg(\al+\frac{\eta}{2(1-d\eta)}\bigg)+\bigg(\al-\frac{\eta}{2(1+d\eta)}\bigg)\bigg]\cos(2k_n d)\,.
\end{aligned}
\end{equation*}
We want to give a perturbative
expansion of the eigenvalues $\la_n$ in the small parameter $\eta$ up to the second order:
\begin{equation*}
\la_n= \sum_{j=0}^{2} \la_n^{(j)} (d \eta)^j +{\mathcal O}(\eta^3)\,.
\end{equation*}
The length $d$ appears in the expansion for dimensional reasons only, the small parameter here is $\eta$.
The coefficients $k_n^{(j)}$, for $j=0,1,2$, in the  expansion $k_n=k_n^{(0)}+k_n^{(1)} d \eta+k_n^{(2)}(d \eta)^2+\OO(\eta^3)$ can be obtained from the expansion
$\De(k_n,\eta)=\De^{(0)}+\De^{(1)}d\eta+\De^{(2)}(d \eta)^2+\OO(\eta^3)$ and imposing $\De^{(j)}=0$,  $j=0,1,2$. A straightforward calculations gives
\begin{equation*}
\begin{aligned}
k^{(0)}_n & = p_n \\
k^{(1)}_n & = 0 \\
k^{(2)}_n & = -\frac{p_n\big[\al-2d\big(\al^2+p_n^2\big)\big]}{4d^2\big(\al^2+p_n^2\big)\big[\al+d \big(\al^2+p_n^2\big)\big]} \\
\end{aligned}
\end{equation*}
where $p_n$ were defined in  equations \eqref{eigvaleven} and \eqref{eigvalodd}, and since $\la_n = k_n^2$ we immediately obtain
\begin{equation}
\begin{aligned}
\la^{(0)}_n & = \mu_n \\
\la^{(1)}_n & = 0 \\
\la^{(2)}_n & = 
-\frac{\mu_n[\al-2d(\al^2+\mu_n)]}{2d^2(\al^2+\mu_n)[\al+d(\al^2+\mu_n)]}\,.
\label{lan2}
\end{aligned}
\end{equation}
The  coefficients $\la_n^{(2)}$, $n=0, 1, \ldots $, are implicitly defined as functions of $\al$. 

Equations \eqref{Hve}, \eqref{robinstorto} and \eqref{alves} indicate that 
for each $s$ the transverse kinetic term of $H^\ve$ has the same form of $\hhd_{\al_1,\al_2}$ and that
our hypothesis allows a perturbative analysis of its spectrum.  Therefore for all $s\in\RE$ we consider the  Hamiltonian $\hhd_{\al_1^\ve,\al_2^\ve}:L^2((-d,d))\to L^2((-d,d))$, where $\al_1^\ve(s)$ and  $\al_2^\ve(s)$ were defined in \eqref{alves}. Such Hamiltonian depends on $\ve$ and $s$ only via the constants $\al_1^\ve(s)$ and $\al_2^\ve(s)$ in the  boundary conditions. In the following we shall use the notation $\underline{\al}^\ve(s)\equiv(\al_1^\ve(s),\al_2^\ve(s))$.  

The eigenvalues and  the eigenvectors of  $\hhd_{\al_1^\ve,\al_2^\ve}$ are defined  according to \eqref{eigen} and \eqref{phin}  and will be denoted by $\la_n^\ve(s)$ and  $\phi_n^\ve(s)$. We shall use also the notation $\phi_n^\ve(s)\equiv\phi_n( \underline{\al}^\ve(s))$ to remind that $\phi_n^\ve(s)$ depend on $s$ and $\ve$ only via $\al_1^\ve(s)$ and $\al_2^\ve(s)$. For fixed $s\in\RE$, $\big\{\phi_n( \underline{\al}^\ve(s) )\big\}_{n=0,1,\dots}$ is an orthonormal basis of $L^2((-d,d))$.

For all $z\in{\CO\backslash\RE}$ and for all $m,n=0,1,2,\dots$, we denote by $r_{m,n}^\ve(z)$  the reduced resolvent of $H^\ve$, i.e., the operator in $\BB\big(L^2(\RE),L^2(\RE)\big)$ defined via its integral kernel by\footnote{We denote by $\mathscr{B}(L^2(\HH),L^2(\HH'))$ the Banach space of bounded operators from $\HH\to\HH'$ and by $\|\cdot\|_{\mathscr{B}(L^2(\HH),L^2(\HH'))}$ the corresponding norm.}
\begin{equation*}
r_{m,n}^\ve(z;s,s'):=
\int_{-d}^{d}\int_{-d}^{d}
\phi_m(\underline\al^\ve (s),u)\,\Big(H^\ve-\frac{\mu_n}{\dde}-z\Big)^{-1}(s,u;s',u')\,\phi_n(\underline\al^\ve (s'),u')
du\,du'\,.
\end{equation*}
Notice that we have subtracted the divergent quantity $\mu_n/\dde$ from $H^\ve$ in order 
to compensate the divergence of the transverse kinetic energy and get a non trivial limit;
this procedure was already used in \cite{Pos05}, \cite{DT06}, \cite{ACF07} and \cite{CE07}.\\

We also need to recall some facts on one dimensional Schr\"odinger operators with short range potentials in $L^2(\RE)$.
Let us consider the Hamiltonian $h$ given by:
\begin{equation}
\label{barh}
h:=-\frac{d^2}{ds^2}+ v(s)\,,
\end{equation}
and let us assume that for some $c>0$
\begin{equation}
\label{vass}
\int_\erre  v(s)ds\neq0\qquad\qquad\qquad
e^{c|\cdot\,|} v\in L^1(\erre)\,.
\end{equation} 
We say that $ h$ has a zero
energy resonance if there exists $f_r\in L^{\infty}(\erre)$,  
$f_r \notin L^2(\erre)$ such that $hf_r =0$ in distributional sense.
Furthermore, if $f_r$ exists, it is unique, up to a trivial
multiplicative constant and one can define two constants 
\begin{equation}
\label{c1c2}
c_-:=\lim_{s \to - \infty} f_r (s)\qquad\textrm{and}\qquad  c_+:=\lim_{s \to+\infty} f_r (s)\,.
\end{equation}
The constants $c_-$
and $c_+$ can not be both zero, in such a case $f_r$
would be in $L^2(\erre)$, then zero would be an eigenvalue for $h$
(see Lemma 2.2. in \cite{BGW85}), but this is impossible under our
assumptions on $v$, see Theorem 5.2. in \cite{JN01}.
We can choose  $c_-$ and $c_+$ real and such that $c_-^2+c_+^2=1$.

Let $h_r$ be the following family of self-adjoint operators depending on $c_-$ and $c_+$
\begin{equation}
\label{domhr}
{\mathscr D}(h_r) := \{ f\in H^2(\erre \setminus 0 ) \,\, s.t. \,\,
c_- f(0^+ ) = c_+ f(0^- )
\, , \, c_+ f' (0^+ ) -c_- f' (0^- )=0 \}
\end{equation}
\begin{equation}
\label{hr}
h_r f := - \frac{d^2 f}{ds^2} \qquad s\neq 0\,.
\end{equation}
The Hamiltonian $h_r$ is a self-adjoint extension of the
symmetric operator $-\Delta$ in dimension one  defined on
$C_0^{\infty}(\erre \setminus \{ 0 \} )$.  For $c_-=c_+$ the operator $h_r$ coincides with the free   Laplacian on the line; we refer to \cite{ABD95} for
a comprehensive characterization of the point perturbations of the
Laplacian in dimension one. Let us notice that the operator $h_r$ can be rewritten as an operator on $H^2((0,\infty))\oplus H^2((0,\infty))\subset L^2((0,\infty))\oplus L^2((0,\infty))$ by defining $f_1$ and $f_2$ in $H^2((0,\infty))$ such that $f_1(x)=f(x)$ for $x>0$ and $f_{2}(-x)=f(x)$ for $x<0$. Within this notation the condition in $x=0$ in the domain of $h_r$ can be written as we did in the introduction, see equation \eqref{scaleinv}.
 
We denote the one dimensional Laplacian with decoupling (or  Dirichlet) gluing  conditions in the origin by $h_0$
\begin{equation*}
{\mathscr D}(h_0) := \{ f\in H^2(\erre \setminus 0 ) \cap
H^1(\erre  )\,\,s.t.\,\,  f(0)=0 \} 
\end{equation*}
\begin{equation*}
h_0 f: = - \frac{d^2 f}{ds^2} \qquad s\neq 0\,.
\end{equation*}
Even the operator $h_0$ can be written in the standard notation of the Laplacian  on graphs, the condition in $x=0$ would correspond to a decoupling condition in the vertex, see equation \eqref{deccon}.

Now we rescale $h$ in the following way 
\begin{equation}
\label{barhve}
 h^\ve:=-\frac{d^2}{ds^2}+\frac{1}{\ve^2} v(s/\ve)
\end{equation}
and we discuss the convergence of $h^\ve$ in resolvent sense. The following proposition is taken from lemma 1 in \cite{ACF07}.
\begin{proposition}
\label{prop1}
Take $h$ and $h^\ve$ defined as above and
assume \eqref{vass}. Then two cases can occur:
\begin{enumerate}
\item There does not exist a zero energy resonance for the Hamiltonian $h$, then
\begin{equation*}
\ulim_{\ve\to0} ( h^\ve - z )^{-1} = ( h_0 -z )^{-1} \qquad
 z \in \CO \setminus \erre\,.
\end{equation*}
\item There exists a zero energy resonance $f_r$ for the Hamiltonian $h$, then
\begin{equation*}
\ulim_{\ve\to0} (h^\ve - z )^{-1} = (h_r -z )^{-1} \qquad
z \in \CO \setminus \erre\,
\end{equation*}
where $h_r$ was defined in \eqref{domhr} and \eqref{hr}.
\end{enumerate}
\end{proposition}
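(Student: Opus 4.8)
The plan is to turn the $\ve\to0$ limit into a zero--energy (threshold) resolvent expansion for the fixed operator $h$, by means of an exact dilation, and then to read off the limiting gluing condition from the low--energy scattering data of $h$. The crucial structural fact is that the scaling $\frac1{\ve^2}v(\cdot/\ve)$ is precisely the one that is covariant for the dilation group. Introduce the unitary $W_\ve$ on $L^2(\RE)$, $(W_\ve\psi)(y):=\ve^{1/2}\psi(\ve y)$. A direct computation gives $W_\ve h^\ve W_\ve^{-1}=\ve^{-2}h$, hence
\begin{equation*}
(h^\ve-z)^{-1}=W_\ve^{-1}\,\ve^2\,(h-\ve^2 z)^{-1}\,W_\ve\,,\qquad z\in\CO\setminus\RE\,.
\end{equation*}
The two candidate limit operators $h_0$ and $h_r$ are \emph{scale invariant}: their gluing conditions \eqref{domhr} contain no dimensional constant, so $W_\ve h_\sharp W_\ve^{-1}=\ve^{-2}h_\sharp$ for $h_\sharp\in\{h_0,h_r\}$ as well. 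Writing $\kappa:=\ve\sqrt z$ with $\Im\kappa>0$ and using unitarity of $W_\ve$,
\begin{equation*}
\big\|(h^\ve-z)^{-1}-(h_\sharp-z)^{-1}\big\|=\ve^2\,\big\|(h-\kappa^2)^{-1}-(h_\sharp-\kappa^2)^{-1}\big\|\,.
\end{equation*}
Since $\kappa^2=\ve^2 z$ lies at distance $\sim\ve^2|\Im z|$ from the spectrum, each resolvent on the right has norm of order $\ve^{-2}$; thus the proposition amounts to showing that $(h-\kappa^2)^{-1}$ and $(h_\sharp-\kappa^2)^{-1}$ share the \emph{same} threshold--singular part as $\kappa\to0$, up to an error $o(\kappa^{-2})$, the correct $h_\sharp$ being selected by the resonance of $h$.

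The tool for the threshold expansion is the Konno--Kuroda / Birman--Schwinger factorization. Write $v=v_1 v_2$ with $v_1:=|v|^{1/2}$ and $v_2:=\sgn(v)\,|v|^{1/2}$; the hypothesis $e^{c|\cdot|}v\in L^1$ in \eqref{vass} makes $v_1,v_2$ exponentially decaying, so that the Birman--Schwinger operator $K(\kappa):=v_2 R_0(\kappa)v_1$, where $R_0(\kappa)$ has kernel $\tfrac{i}{2\kappa}e^{i\kappa|s-s'|}$, is Hilbert--Schmidt and analytic in $\kappa$ on a full disc around the origin. The resolvent identity reads
\begin{equation*}
(h-\kappa^2)^{-1}=R_0(\kappa)-R_0(\kappa)\,v_1\,[1+K(\kappa)]^{-1}\,v_2\,R_0(\kappa)\,.
\end{equation*}
Expanding $\tfrac{i}{2\kappa}e^{i\kappa|s-s'|}=\tfrac{i}{2\kappa}-\tfrac12|s-s'|+O(\kappa)$ and pairing against the exponentially decaying $v_1,v_2$ (so that the polynomially growing terms are integrable) yields
\begin{equation*}
K(\kappa)=\frac{i}{2\kappa}\,|v_2\rangle\langle \overline{v_1}|+K_0+O(\kappa)\,,
\end{equation*}
where the leading term is rank one and $K_0$ has kernel $-\tfrac12 v_2(s)|s-s'|v_1(s')$. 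The coupling $\langle \overline{v_1},v_2\rangle=\int_\RE v(s)\,ds\neq0$ from \eqref{vass} measures the nondegeneracy of this rank--one term.

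The heart of the matter is then the inversion of $1+K(\kappa)$ as $\kappa\to0$, a rank--one singular perturbation of the analytic family $1+K_0+O(\kappa)$. I would handle it by a Feshbach/Grushin reduction with respect to the one--dimensional range of the singular part, along the lines of the threshold--expansion scheme of Jensen--Nenciu \cite{JN01}, which produces the two alternatives of the statement. \textbf{Generic case:} when $h$ has no zero--energy resonance, the rank--one singularity is nondegenerate, $[1+K(\kappa)]^{-1}$ is controlled at leading order by its $\tfrac{i}{2\kappa}$ term, and the threshold transmission coefficient vanishes; propagating this through the resolvent identity shows that the singular part of $(h-\kappa^2)^{-1}$ is that of a totally reflecting point scatterer, i.e.\ of $(h_0-\kappa^2)^{-1}$, with a remainder $o(\kappa^{-2})$. \textbf{Resonant case:} when $f_r$ exists, $1+K_0$ has a one--dimensional kernel, the Feshbach reduction produces an additional $\kappa^{-1}$ contribution, and the bounded zero--energy solution $f_r$, reconstructed from the kernel vector, has finite limits $c_\pm=f_r(\pm\infty)$ as in \eqref{c1c2}; matching the resulting singular part against the explicit resolvent of the scale--invariant point interaction identifies the limit as $(h_r-\kappa^2)^{-1}$ with exactly these constants. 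In either case, undoing the dilation and using $\ve^2\cdot o(\kappa^{-2})=o(1)$ gives the claimed norm--resolvent convergence.

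I expect the main obstacle to be the inversion step in the resonant case: isolating the $\kappa^{-1}$ pole of $[1+K(\kappa)]^{-1}$, controlling the analytic remainder uniformly in the exponentially weighted spaces, and, above all, verifying that the finite part surviving the Feshbach reduction reproduces precisely the gluing conditions \eqref{domhr}--\eqref{hr} of $h_r$ with the normalization $c_+^2+c_-^2=1$. The conditions $\int_\RE v\neq0$ and $e^{c|\cdot|}v\in L^1$ in \eqref{vass} are used exactly to make the rank--one analysis nondegenerate and the error estimates convergent; since this computation is carried out in detail in \cite{ACF07}, I would quote it there rather than reproduce it.
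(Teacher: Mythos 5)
The paper itself offers no proof of Proposition \ref{prop1}: it is imported verbatim from Lemma 1 of \cite{ACF07}, and your route --- the dilation identity $W_\ve h^\ve W_\ve^{-1}=\ve^{-2}h$, the scale invariance of $h_0$ and $h_r$, the resulting reduction of norm-resolvent convergence to a threshold expansion of $(h-\kappa^2)^{-1}$, and the Konno--Kuroda/Birman--Schwinger factorization --- is precisely the strategy behind that reference, to which you too defer the detailed computation. The scaling identities and your generic-case analysis are correct.

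There is, however, a concrete error at the step you yourself single out as the heart of the matter: the claim that ``when $f_r$ exists, $1+K_0$ has a one-dimensional kernel'', with $K_0$ given by the kernel $-\frac12 v_2(s)|s-s'|v_1(s')$. This identification is neither necessary nor sufficient. A bounded zero-energy solution can be written as $f_r=\alpha+\frac12\int|\cdot-y|\,v(y)f_r(y)\,dy$, with the boundedness constraint $\int v f_r=0$; setting $\phi:=v_2 f_r$ this becomes $(1+K_0)\phi=\alpha v_2$ together with $\langle v_1,\phi\rangle=0$, and the asymptotic values are $c_\pm=\alpha\mp\frac12\int y\,v f_r\,dy$, so that $c_++c_-=2\alpha$. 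Hence $\phi\in\ker(1+K_0)$ only for the special resonances with $c_+=-c_-$; for every resonance with $c_++c_-\neq 0$ (for instance the even resonance of a symmetric well at critical depth, where $c_+=c_-\neq0$) the operator $1+K_0$ is invertible and the resonance is instead equivalent to the scalar condition $\langle v_1,(1+K_0)^{-1}v_2\rangle=0$. Conversely, a kernel element of $1+K_0$ with $\langle v_1,\phi\rangle\neq0$ corresponds to a zero-energy solution growing linearly at both ends, i.e.\ to no resonance at all. The correct dichotomy appears once the rank-one singular part is inverted first: with $A(\kappa):=1+K(\kappa)-\frac{i}{2\kappa}|v_2\rangle\langle v_1|=1+K_0+O(\kappa)$, the Sherman--Morrison formula gives
\begin{equation*}
\big[1+K(\kappa)\big]^{-1}=A(\kappa)^{-1}-\frac{\frac{i}{2\kappa}\,A(\kappa)^{-1}|v_2\rangle\langle v_1|\,A(\kappa)^{-1}}{1+\frac{i}{2\kappa}\,\langle v_1,A(\kappa)^{-1}v_2\rangle}\,,
\end{equation*}
and everything is governed by whether $\langle v_1,A(0)^{-1}v_2\rangle$ vanishes or not; the case $\ker(1+K_0)\neq\{0\}$ requires a separate, genuinely Feshbach, treatment. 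As written, your reduction keyed to $\ker(1+K_0)$ would classify all resonances with $c_++c_-\neq0$ as generic and deliver $(h_0-z)^{-1}$ where the proposition asserts convergence to $(h_r-z)^{-1}$. The repair is standard and is what \cite{ACF07}, following \cite{BGW85} and \cite{JN01}, in effect does: read the two alternatives off the function $\kappa\mapsto\langle v_1,A(\kappa)^{-1}v_2\rangle$; kernels of operators detect the resonance in the Jensen--Nenciu scheme only after conjugation with suitable projections, not for $1+K_0$ itself.
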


 We denote by $\bt_n$ the coefficients
 \begin{equation}
 \label{btn}
 \bt_n:=-1/4+\la_n^{(2)}
 \end{equation}
  and by $h_n$ the Hamiltonian
\begin{equation*}
h_n:=-\displaystyle\frac{d^2}{ds^2}+\bt_n\ga^2(s)\,.
\end{equation*}
Our main result is stated in the  following theorem.
\begin{theorem}
\label{mainth}
Assume that $\Ga$ has no self-intersections and that $\ga\in C_0^\infty(\RE)$, moreover take $a>3 $, then for all $n,m=0,1,2,\dots$ two cases can occur:
\begin{enumerate}
\item For all $n$ such that there does not exist a zero energy resonance for $h_n$ we have
\begin{equation*}
\ulim_{\ve\to0}r_{m,n}^\ve(z)= \delta_{m,n}( h_0 -z )^{-1} \qquad
 z \in \CO \setminus \erre\,.
\end{equation*}
\item For all $n$ such that there exists  a zero energy resonance, $f_{r,n}$, for  $h_n$ we have
\begin{equation*}
\ulim_{\ve\to0} r_{m,n}^\ve(z)=\delta_{m,n} (h_{r,n} -z )^{-1} \qquad
z \in \CO \setminus \erre\,.
\end{equation*}
where   $h_{r,n}$ is defined according to equations \eqref{domhr} and \eqref{hr}.
\end{enumerate}
\end{theorem}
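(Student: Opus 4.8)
\emph{Strategy.} The plan is to reduce the two-dimensional problem to the one-dimensional one already settled in Proposition~\ref{prop1}. For every fixed $s$ the family $\{\phi_n(\underline\al^\ve(s),\cdot)\}_{n\geqslant0}$ is an orthonormal basis of $L^2((-d,d))$, so $\mathcal U^\ve:\psi(s,u)\mapsto\big(\langle\phi_n(\underline\al^\ve(s),\cdot),\psi(s,\cdot)\rangle\big)_{n\geqslant0}$ is unitary from $L^2$ onto $\bigoplus_{n}L^2(\RE)$, and by definition $r_{m,n}^\ve(z)$ is precisely the $(m,n)$ block of $\mathcal U^\ve\big(H^\ve-\mu_n/\dde-z\big)^{-1}(\mathcal U^\ve)^*$. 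In this moving frame $H^\ve-\mu_n/\dde$ becomes an operator-valued matrix, and I would prove three things: its $n$-th diagonal block is, up to a vanishing remainder, the one-dimensional operator $h_n^\ve=-d^2/ds^2+\ve^{-2}\bt_n\ga^2(\cdot/\ve)$ of \eqref{barhve}; every other block is negligible as $\ve\to0$; and then Proposition~\ref{prop1} with $v=\bt_n\ga^2$ identifies the limit.

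\emph{The diagonal block.} First I would compute the matrix elements of $H^\ve-\mu_n/\dde$ in the basis $\phi_m^\ve(s)$. The transverse term $-\dde^{-1}\partial_u^2$ with the boundary conditions \eqref{robinstorto} is diagonal, with eigenvalue $\la_m^\ve(s)/\dde$; since $\eta^\ve(s)^2=(\dde/\ve^2)\ga^2(s/\ve)$, the perturbative expansion \eqref{lan2} (recall $\la_n^{(1)}=0$) turns the shifted $n$-th eigenvalue into a term of order $\ve^{-2}\ga^2(s/\ve)$ plus an error $O((\eta^\ve)^3/\dde)=O(\ve^{a-3})$. The diagonal part of the potential $\ve^{-2}V^\ve$ contributes the leading term $-\ve^{-2}\ga^2(s/\ve)/4$, and by the very definition \eqref{btn} of $\bt_n$ the two combine into the potential $\ve^{-2}\bt_n\ga^2(s/\ve)$ of $h_n^\ve$. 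The longitudinal term $-\partial_s(1+u\eta^\ve)^{-2}\partial_s$ gives $-d^2/ds^2$ on the diagonal, plus corrections from expanding $(1+u\eta^\ve)^{-2}=1-2u\eta^\ve+\cdots$ and from the $s$-dependence of the frame, i.e.\ from $\partial_s\phi_m^\ve$ and $\partial_s^2\phi_m^\ve$. Collecting everything, the $n$-th diagonal block is $h_n^\ve+R^\ve$ with $R^\ve$ a remainder, each block of index $m\neq n$ carries in addition the large constant $(\mu_m-\mu_n)/\dde$, and there remain off-diagonal blocks coupling distinct modes through $\partial_s\phi_m^\ve$ and through the $u$-dependent part of $V^\ve$.

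\emph{Decoupling of the modes: the main obstacle.} The step I expect to be hardest is to show that $R^\ve$ and all off-diagonal blocks disappear in the uniform resolvent topology. For $m\neq n$ the spectral gap is decisive: the shift $(\mu_m-\mu_n)/\dde\sim\ve^{-2a}$ dominates the $O(\ve^{-2})$ depth of the potential wells (here $a>1$ is used), so the corresponding resolvent entries are $O(\dde)$ and $r_{m,n}^\ve(z)\to0$, which produces the factor $\delta_{m,n}$. The remaining corrections are controlled by the scaling: differentiating \eqref{alves} gives $\al_i^\ve-\al=O(\ve^{a-1})$, $\partial_s\al_i^\ve=O(\ve^{a-2})$ and $\partial_s^2\al_i^\ve=O(\ve^{a-3})$, so the frame derivatives $\partial_s\phi_m^\ve,\partial_s^2\phi_m^\ve$ and the subleading parts of $\ve^{-2}V^\ve$ are all $O(\ve^{a-3})$ once multiplied out, while $\eta^\ve=O(\ve^{a-1})$ governs the expansion of $(1+u\eta^\ve)^{-2}$. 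The hypothesis $a>3$ is exactly what renders these contributions $o(1)$ in $\BB(L^2(\RE),L^2(\RE))$. To make the bounds rigorous I would rely on the explicit transverse kernel \eqref{reso} (rescaled by $\dde$) and on the resolvent formula and technical estimates deferred to appendix~\ref{appendice}, using that $\ga\in C_0^\infty(\RE)$ keeps $\ga,\ga',\ga''$ evaluated at $\cdot/\ve$ uniformly bounded in $L^\infty$.

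\emph{Conclusion.} Once the off-diagonal blocks and $R^\ve$ are controlled, a second-resolvent-identity (Feshbach--Schur) argument shows that the $(n,n)$ block of $\mathcal U^\ve(H^\ve-\mu_n/\dde-z)^{-1}(\mathcal U^\ve)^*$ differs from $(h_n^\ve-z)^{-1}$ by a term vanishing in the $\ulim$ topology, the Feshbach correction being doubly suppressed by the coupling smallness and the gap. It then suffices to invoke Proposition~\ref{prop1} with $v=\bt_n\ga^2$: if $h_n=-d^2/ds^2+\bt_n\ga^2$ has no zero energy resonance then $\ulim_{\ve\to0}(h_n^\ve-z)^{-1}=(h_0-z)^{-1}$, which is case~(1); if a zero energy resonance $f_{r,n}$ exists, the limit is $(h_{r,n}-z)^{-1}$, with the constants $c_\pm=\lim_{s\to\pm\infty}f_{r,n}(s)$ fixing the scale-invariant gluing conditions of \eqref{domhr}, which is case~(2). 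Combined with the vanishing of the blocks with $m\neq n$, this yields the claimed limits $\ulim_{\ve\to0}r_{m,n}^\ve(z)=\delta_{m,n}(h_0-z)^{-1}$ and $\delta_{m,n}(h_{r,n}-z)^{-1}$ respectively.
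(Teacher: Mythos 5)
You follow the same two\,-step strategy that the paper announces at the beginning of section \ref{sec4}: (i) show that the reduced resolvent converges to $\delta_{m,n}(\hh^\ve_n-z)^{-1}$ with $\hh^\ve_n$ as in \eqref{hven}, and (ii) invoke proposition \ref{prop1} with $v=\bt_n\ga^2$. Step (ii) is handled exactly as in the paper. For step (i) the paper proceeds through the intermediate operator $\hat H^\ve$ (lemma \ref{lemma1}, controlled via a Krein-type formula for $\hat R^\ve$ proved in the appendix) and then through the compression $\hat h_n^\ve$ (lemma \ref{lemma2}), whereas you propose a moving-frame conjugation and a Feshbach--Schur block analysis; that reorganization is legitimate in principle. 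However, one of your key steps is wrong as stated.

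The flaw is the claim that for $m\neq n$ ``the spectral gap is decisive'', i.e.\ that the shift $(\mu_m-\mu_n)/\dde$ dominates the $O(\ve^{-2})$ potential wells, so that the corresponding resolvent entries are $O(\dde)$ and only $a>1$ is needed there. This is true only for $m>n$. For $m<n$ the shift $(\mu_m-\mu_n)/\dde$ is \emph{negative}, and the $m$-th diagonal block, approximately
\begin{equation*}
-\frac{d^2}{ds^2}+\frac{\bt_m}{\ve^2}\,\ga^2(\cdot/\ve)+\frac{\mu_m-\mu_n}{\dde}\,,
\end{equation*}
has essential spectrum $[(\mu_m-\mu_n)/\dde,\infty)$, which for small $\ve$ contains every fixed neighborhood of $\Re z$; its resolvent therefore has norm of order $|\Im z|^{-1}$, not $O(\dde)$. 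Consequently, for the finitely many modes $m<n$ the vanishing of $r^\ve_{m,n}(z)$ cannot come from any gap: in a block decomposition the off-diagonal resolvent entry is proportional to the inter-mode coupling, so it vanishes only because the couplings (frame derivatives $\partial_s\phi_m^\ve$, $\partial^2_s\phi_m^\ve$, the $u$-dependent parts of $\ve^{-2}V^\ve$, and the metric corrections multiplying $\partial_s$ and $\partial^2_s$) vanish — and this is precisely where $a>3$ enters, not only for the diagonal remainder $R^\ve$ as your text suggests. Moreover these couplings are first and second order differential operators, so their ``smallness'' is meaningful only relative to resolvent-derivative bounds such as \eqref{stimona} and \eqref{stimina}; you borrow these from the appendix, but in the paper they are obtained through the Krein-formula decomposition of $\hat R^\ve$, which your scheme was meant to bypass. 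In the paper's lemma \ref{lemma2} the factor $\delta_{m,n}$ arises instead from the exact orthogonality of $\phi_m^\ve(s)$ and $\phi_n^\ve(s)$ at the same point $s$, after frame-freezing errors of order $\de^\ve/\ve$ are controlled — a mechanism valid for all $m\neq n$, above or below $n$. Your conclusion is recoverable, since you do estimate the couplings by $O(\ve^{a-3})$, but the decoupling step must be rewritten to rest on those coupling estimates rather than on the gap.
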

Notice that the constants $c_{-,n}$ and $c_{+,n}$ in the definition of $\DD(h_{r,n})$ may depend on $n$ and are related to $f_{r,n}$ via equation \eqref{c1c2}. 

We remark that the parameter $\al$ in the  boundary conditions affects the limit only through the coefficients $\beta_n (\al)$.  In figure \ref{fig2} the  functions  $\bt_n(\al)$ are plotted for $n=0,1,2,3$ and $d=1$.
\begin{figure}[h!]
\begin{center}
\includegraphics[width=8cm,angle=-90]{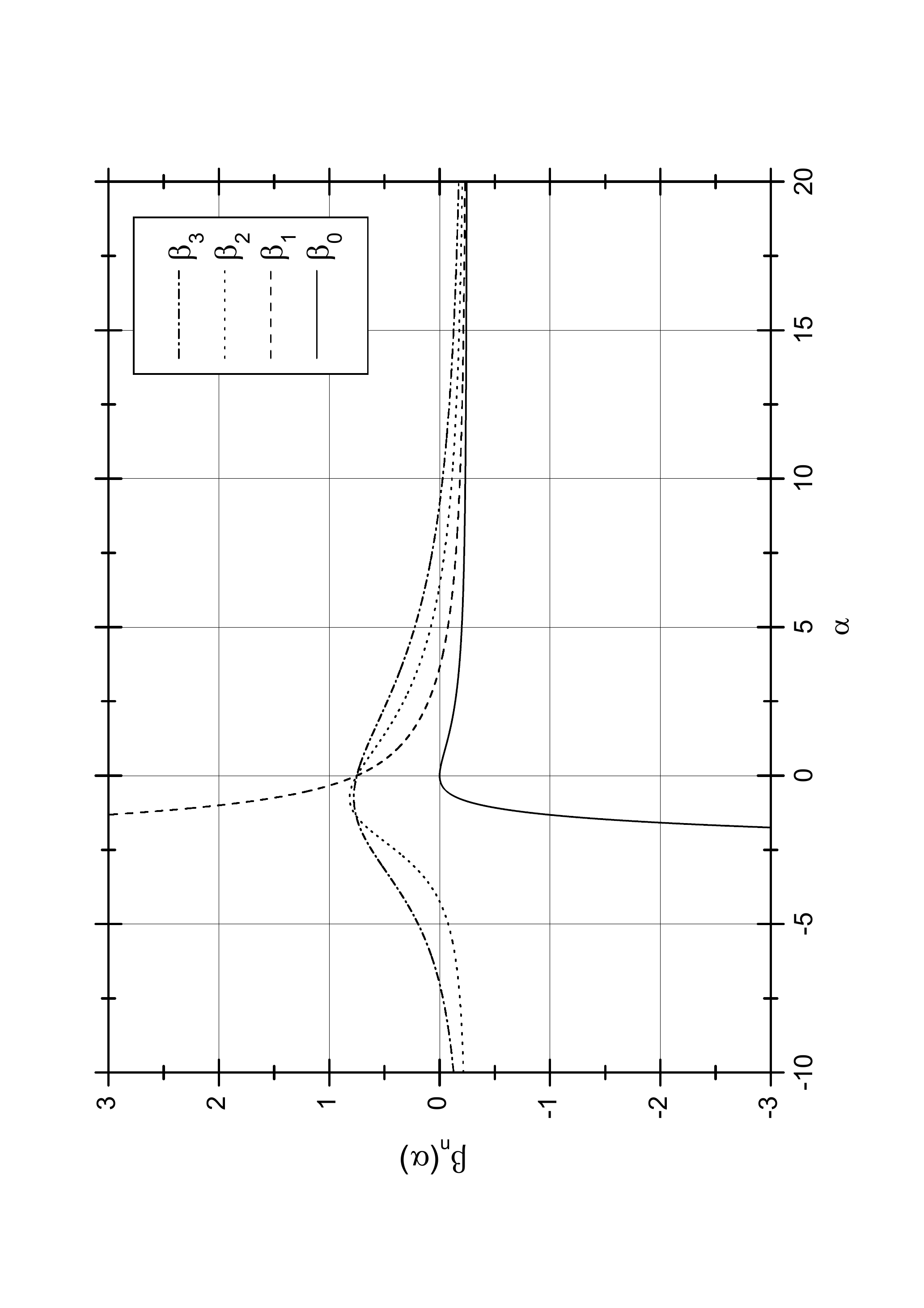}
\caption{
\label{fig2}
Plot of $\beta_n(\al)$ for $n=0,1,2,3$ and $d=1$.}
\end{center}
\end{figure}
The exceptional case $\bt_n(\al)=0$ is included in the statement of the theorem because in such a case the Hamiltonian $h_n$ has a zero energy resonance given by the constant function, therefore $c_{-,n}$ and $c_{+,n}$ coincide and the limit operator $h_{r,n}$ is the free, one dimensional, Laplacian.

\section{Proof of theorem \ref{mainth}\label{sec4}}
\setcounter{equation}{0}

Before getting into the technical core of this section devoted to the the proof of the main theorem, let us spend few words on the strategy we shall follow. In our regime the curvature is slowly varying with respect to the transverse dynamics. In
particular this means that $\al^\ve_1 (s)$ and $\al^\ve_2 (s)$ are slowly varying functions and that each subspace
corresponding to an eigenstate $\phi_n^\ve$ is ``adiabatically protected'' in the limit $\ve \to 0$.
This allows us to split the proof into two steps.

\begin{enumerate}
\item  First we prove that each subspace
corresponding to an eigenstate $\phi_n^\ve$ is adiabatically protected in the limit $\ve \to 0$ and we prove that the leading term in the reduced Hamiltonian (up to the renormalization factor $\mu_n/\de^{\ve\,2}$) is 
\begin{equation}
\label{hven}
\hh^\ve_n:=-\frac{d^2}{ds^2}+\frac{\bt_n}{\ve^2}\ga^2(s/\ve)
\end{equation}
where $\bt_n$ were defined in \eqref{btn}. This is done in lemma \ref{lemma1} and lemma \ref{lemma2}.

\item As a second step we  study the limit of $h^\ve_n$. Here we shall make use of the  proposition   \ref{prop1} to prove that, for each $n$, two cases can occur: if the potential $\beta_n\gamma^2$ generates a zero energy resonance for $h_n$,  then $h_n^\ve$ converges to an operator of  the family defined in \eqref{domhr} - \eqref{hr} otherwise the limit operator  is $h_0$, i.e.,  the Laplacian on the line with decoupling conditions in the origin.
\end{enumerate}

In the proofs $c$ will denote a generic positive constant whose value can change from line to line. 

Let $\hat H^\ve$ be the Hamiltonian 
\begin{equation}
\label{domhami}
\DD(\Hd^\ve):=\bigg\{\psi\in H^2(\Omega')\,s.t.\,
\pd{\psi}{u} (s ,  d ) + \al_1^\ve (s) \psi (s , d ) =0\;,
-\pd{\psi}{u} (s , - d ) + \al_2^\ve (s) \psi (s , -d ) =0
\bigg\}
\end{equation}
\begin{equation}
\Hd^\ve:=-\pd{^2}{s^2}-\frac{1}{\de^{2\,\ve}}\pd{^2}{u^2}-\frac{1}{\ve^2}\frac{\ga^2(s/\ve)}{4}\,.
\label{hami}
\end{equation}
For all $z\in{\CO\backslash\RE}$ and for all $m,n=0,1,2,\dots$, we denote by $\rbar_{m,n}^\ve(z)$  the reduced resolvent of $\Hd^\ve$, i.e., the operator in $\BB\big(L^2(\RE),L^2(\RE)\big)$ defined via its integral kernel by
\begin{equation*}
\hat r_{m,n}^\ve(z;s,s'):=
\int_{-d}^{d}\int_{-d}^{d}
\phi_m(\underline\al^\ve (s),u)\,\Big(\Hd^\ve-\frac{\mu_n}{\dde}-z\Big)^{-1}(s,u;s',u')\,\phi_n(\underline\al^\ve (s'),u')
du\,du'\,.
\end{equation*}

In the following lemma we  prove that $\hat r_{m,n}^\ve$ approximates $ r_{m,n}^\ve$.
\begin{lemma}
\label{lemma1}
Let $\ga\in C_0^\infty(\RE)$ and  $a>3$ then for all $n,m=0,1,2,\dots$
\begin{equation*}
\ulim_{\ve\to 0}
\big(r_{m,n}^\ve(z)-\hat r_{m,n}^\ve(z)\big)=0\qquad \forall z\in\CO\backslash\RE\,.
\end{equation*}
\end{lemma}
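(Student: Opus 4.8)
The plan is to compare the two reduced resolvents through a second resolvent identity and to estimate the error term by term, exploiting that the two operators differ only by small longitudinal and potential contributions while sharing the same transverse operator. Set $w_n:=z+\mu_n/\dde$ and note that $\Im w_n=\Im z\neq0$, so both $(H^\ve-w_n)^{-1}$ and $(\Hd^\ve-w_n)^{-1}$ are bounded on $L^2$ by $|\Im z|^{-1}$, \emph{uniformly} in $\ve$. Write $\Pi_n^\ve\colon L^2\to L^2(\RE)$ for the contraction $(\Pi_n^\ve\Psi)(s)=\int_{-d}^d\phi_n(\underline\al^\ve(s),u)\Psi(s,u)\,du$ and $\Pi_n^{\ve*}$ for its adjoint, which is an isometry since $\phi_n(\underline\al^\ve(s),\cdot)$ is normalised for each $s$; then $r_{m,n}^\ve(z)=\Pi_m^\ve(H^\ve-w_n)^{-1}\Pi_n^{\ve*}$ and $\rbar_{m,n}^\ve(z)=\Pi_m^\ve(\Hd^\ve-w_n)^{-1}\Pi_n^{\ve*}$. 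Because $H^\ve$ and $\Hd^\ve$ carry the same $-\dde^{-1}\partial_u^2$ and the same Robin conditions \eqref{robinstorto}, their domains coincide and
\begin{equation*}
r_{m,n}^\ve(z)-\rbar_{m,n}^\ve(z)=\Pi_m^\ve(H^\ve-w_n)^{-1}\big(\Hd^\ve-H^\ve\big)(\Hd^\ve-w_n)^{-1}\Pi_n^{\ve*}\,.
\end{equation*}
Comparing \eqref{Hve}, \eqref{hami} and the formula for $V^\ve$, the difference splits as $\Hd^\ve-H^\ve=D^\ve_{\mathrm{pot}}+D^\ve_{\mathrm{kin}}$, where $D^\ve_{\mathrm{pot}}=-\ve^{-2}\big(\tfrac14\ga^2(\cdot/\ve)+V^\ve\big)$ is a multiplication operator and $D^\ve_{\mathrm{kin}}=\partial_s\big(\rho^\ve-1\big)\partial_s$ with $\rho^\ve:=(1+u\eta^\ve)^{-2}$.

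For the potential term I would Taylor expand the factors $(1+u\eta^\ve)^{-j}$ at $u\eta^\ve=0$. Since $|u|\leqslant d$ and $\eta^\ve=(\de^\ve/\ve)\ga(\cdot/\ve)=\OO(\ve^{a-1})$, the cancellation of the constant parts gives $\tfrac14\ga^2+V^\ve=\OO(\ve^{a-1})$ uniformly in $(s,u)$, with leading contribution $\tfrac12u\,\ve^{a-1}\ga^3(\cdot/\ve)$; hence $\|D^\ve_{\mathrm{pot}}\|_{\BB(L^2,L^2)}=\OO(\ve^{a-3})$. Combined with the uniform bound $|\Im z|^{-1}$ on the two resolvents and $\|\Pi^\ve_m\|=\|\Pi^{\ve*}_n\|=1$, the contribution of $D^\ve_{\mathrm{pot}}$ is $\OO(\ve^{a-3})$, which vanishes precisely because $a>3$. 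This is the term that fixes the threshold on $a$.

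The kinetic term is the technical heart. Using $\partial_s^*=-\partial_s$ and self-adjointness, its contribution equals $-\big(\partial_s(H^\ve-\bar w_n)^{-1}\Pi_m^{\ve*}\big)^{*}(\rho^\ve-1)\big(\partial_s(\Hd^\ve-w_n)^{-1}\Pi_n^{\ve*}\big)$, and since $\|\rho^\ve-1\|_\infty=\OO(\ve^{a-1})$ it is bounded by $\OO(\ve^{a-1})$ times the two factors $\|\partial_s(\Hd^\ve-w_n)^{-1}\Pi_n^{\ve*}\|$ and its $(H^\ve,m)$ analogue. The obstacle is that a naive quadratic-form estimate controls $\|\partial_s\psi\|$ only through the full energy of $\psi$, which carries the divergent transverse level $\mu_n/\dde$ and thus yields the fatal bound $\OO(1/\de^\ve)$; with it the kinetic contribution would be $\OO(\ve^{a-1}/\dde)\to\infty$. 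The way out is the adiabatic separation of the transverse modes: expanding $\psi=(\Hd^\ve-w_n)^{-1}\Pi_n^{\ve*}f$ in $\{\phi_k(\underline\al^\ve(s),\cdot)\}_k$, the transverse energy $\mu_n/\dde$ produced by $\Re w_n$ is cancelled exactly on the $n$-th channel, the components on the modes $k\neq n$ are damped by the gap $(\mu_k-\mu_n)/\dde\sim1/\dde$ and are driven only by the adiabatic couplings $\langle\phi_k^\ve,\partial_s\phi_n^\ve\rangle=\OO(\ve^{a-2})$. Hence on the relevant channel the longitudinal derivative feels only the potential scale: $\partial_s(\Hd^\ve-w_n)^{-1}\Pi_n^{\ve*}$ reduces, up to negligible errors, to $\partial_s(\hh_n^\ve-z)^{-1}$, and since $\bt_n\ga^2/\ve^2\geqslant-c/\ve^2$ the energy estimate for $\hh_n^\ve$ of \eqref{hven} gives $\|\partial_s(\hh_n^\ve-z)^{-1}\|=\OO(\ve^{-1})$. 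The same bound holds for the left factor with $(H^\ve,m)$ in place of $(\Hd^\ve,n)$.

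Collecting the two estimates, the kinetic contribution is $\OO(\ve^{-1}\cdot\ve^{a-1}\cdot\ve^{-1})=\OO(\ve^{a-3})$, so that for $a>3$
\begin{equation*}
\big\|r_{m,n}^\ve(z)-\rbar_{m,n}^\ve(z)\big\|_{\BB(L^2(\RE),L^2(\RE))}=\OO(\ve^{a-3})\xrightarrow[\ve\to0]{}0\qquad\forall z\in\CO\setminus\RE\,,
\end{equation*}
which is the assertion. The single genuine difficulty is thus the uniform-in-$\ve$ control of the longitudinal derivatives of the reduced resolvents, namely showing that they are governed by the $\ve^{-2}$ scale of the bending potential and not by the divergent $\dde^{-1}$ scale of the transverse kinetic energy; this is exactly the adiabatic-protection mechanism that Lemma \ref{lemma2} will make quantitative.
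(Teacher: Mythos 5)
Your overall architecture (a resolvent identity, the splitting of $\Hd^\ve-H^\ve$ into a potential part and a longitudinal kinetic part, and the final $\OO(\ve^{a-3})$ bookkeeping) coincides with the paper's, and your treatment of the potential term is correct and complete: $\tfrac14\ga^2(\cdot/\ve)+V^\ve$ equals the paper's $W^\ve$, with $\|W^\ve\|_{L^\infty}\leqslant c\,\de^\ve/\ve$, so that term contributes $\OO(\ve^{a-3})$ exactly as in the paper. The genuine gap is in the kinetic term, which you yourself call the technical heart and then do not prove. Two distinct estimates are asserted there without proof. First, $\|\partial_s(\Hd^\ve-\mu_n/\dde-z)^{-1}f\phi_n^\ve\|_{L^2}\leqslant c\,\ve^{-1}\|f\|_{L^2(\RE)}$: your justification is an adiabatic mode-decomposition heuristic (cancellation on the $n$-th channel, gap damping on the others), but $\Hd^\ve$ does not commute with the projections onto the $s$-dependent transverse modes, so that decomposition diagonalizes nothing; turning the heuristic into a proof is precisely what occupies most of the paper's argument — the Krein-type formula \eqref{risolvente} comparing $\Hd^\ve$ with the separable operator $\Hd^\ve_0$ of \eqref{domhatH0ve}--\eqref{hatH0ve}, the uniform bounds on $\mathcal{G}^\ve$, $\Gamma^\ve$ and on the charge $\underline{q}^\ve_n$ proved in the appendix, the commutation of $\hat h^\ve$ with $\hat R_0^\ve$ and $\Gamma^\ve$, and finally interpolation from the second-derivative bound \eqref{stimona} to get \eqref{stimina}. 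None of this is optional: as you yourself note, the naive quadratic-form bound yields the fatal $\OO(1/\de^\ve)$.

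Second, and more serious: your integration by parts forces you to also bound $\partial_s(H^\ve-\mu_n/\dde-\bar z)^{-1}g\phi_m^\ve$, a first-derivative estimate on the resolvent of the \emph{full} operator $H^\ve$, applied to a channel-$m$ function while the subtracted energy is $\mu_n/\dde$. You dismiss this with ``the same bound holds,'' but it follows from nothing you have established: the longitudinal coefficient $(1+u\eta^\ve)^{-2}$ of $H^\ve$ depends on both $s$ and $u$, so the separation-of-variables and commutation structure that makes $\Hd^\ve$ tractable is unavailable; moreover for $m\neq n$ the matching between the subtracted constant $\mu_n$ and the transverse mode — which is what produces the crucial cancellation $(\la_n^\ve-\mu_n)/\dde=\OO(\ve^{-2})$ in the paper's computation — fails, leaving mismatches of size $(\mu_m-\mu_n)/\dde$. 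Attempting to transfer the bound from $\Hd^\ve$ to $H^\ve$ by a further resolvent identity reintroduces derivatives of $(H^\ve-w)^{-1}$ and is circular. The paper avoids this trap deliberately: it expands the kinetic difference as $b_1^\ve\partial_s^2+b_2^\ve\partial_s$ so that \emph{all} longitudinal derivatives fall on the $\Hd^\ve$ resolvent (at the price of needing the second-derivative bound $\OO(\ve^{-2})$), while the full resolvent is only ever used with the trivial bound $|\Im z|^{-1}$. You should restructure your kinetic term the same way, or else supply an independent proof of the derivative bound for $H^\ve$ — which would be a genuinely new, and harder, ingredient than anything in the paper.
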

\begin{proof}
It is sufficient to prove that there exists $\ve_0$ such that for all $0<\ve<\ve_0$ and for all $f,g\in C_0^\infty(\RE)$
\begin{equation}
\label{goal}
\big|\big(g,\big[r_{m,n}^\ve(z)-\hat r_{m,n}^\ve(z)\big]f\big)_{L^2(\RE)}\big|\leqslant
c\ve^{a-3} \|g\|_{L^2(\RE)} \|f\|_{L^2(\RE)}    \,. 
\end{equation}
We shall make  use of the fact that given a self-adjoint operator $A$ in some Hilbert space $\mathcal H$ the inequality
\begin{equation}
\label{stimaresolvent}
\|(A-z)^{-1}\|_{\mathscr{B}(\mathcal H,\mathcal H)}\leqslant\frac{1}{|\Im z|}
\end{equation}
holds. From the first resolvent identity and from the definition of $r_{m,n}^\ve(z)$ and $\hat r_{m,n}^\ve(z)$ we have that
\begin{equation*}
\begin{aligned}
\big|\big(g,\big[r_{m,n}^\ve(z)-\hat r_{m,n}^\ve(z)\big]f\big)_{L^2(\RE)}\big|& \leqslant
\bigg|\bigg(g\phi_m^\ve,\Big(H^\ve-\frac{\mu_n}{\de^{\ve\,2}}-z\Big)^{-1}
b_1^\ve\pd{^2}{s^2}
\Big(\hat H^\ve-\frac{\mu_n}{\de^{\ve\,2}}-z\Big)^{-1}f\phi_n^\ve\bigg)_{L^2}\bigg| \\
& + \bigg|\bigg(g\phi_m^\ve,\Big(H^\ve-\frac{\mu_n}{\de^{\ve\,2}}-z\Big)^{-1}
b_2^\ve\pd{}{s}
\Big(\hat H^\ve-\frac{\mu_n}{\de^{\ve\,2}}-z\Big)^{-1}f\phi_n^\ve\bigg)_{L^2}\bigg| \\
& +\bigg|\bigg(g\phi_m^\ve,\Big(H^\ve-\frac{\mu_n}{\de^{\ve\,2}}-z\Big)^{-1}
\frac{1}{\ve^2}W^\ve
\Big(\hat H^\ve-\frac{\mu_n}{\de^{\ve\,2}}-z\Big)^{-1}f\phi_n^\ve\bigg)_{L^2}\bigg|\,,
\end{aligned}
\end{equation*}
where 
\begin{equation*}
b_1^\ve(s,u)=
-\frac{2u(\de^{\ve}/\ve)\ga(s/\ve)+u^2(\de^{\ve}/\ve)^2\ga(s/\ve)^2}{(1+u\eta^\ve(s))^2}\;;\quad
b_2^\ve(s,u)=
\frac{2u(\de^{\ve}/\ve^2)\ga'(s/\ve)}{(1+u\eta^\ve(s))^3}\;;
\end{equation*}
\begin{equation*}
W^\ve(s,u)=\frac{\de^\ve}{\ve}\bigg[
\frac{\ga(s/\ve)^2}{4}\frac{2u\ga(s/\ve)+u^2(\de^{\ve}/\ve)\ga(s/\ve)^2}{(1+u\eta^\ve(s))^2}+
\frac{u\ga''(s/\ve)}{2(1+u\eta^\ve(s))^3}-\frac{5}{4}\frac{u^2(\de^{\ve}/\ve)\ga'(s/\ve)}{(1+u\eta^\ve(s))^4}
\bigg]\,.
\end{equation*}
$b_1^\ve$, $b_2^\ve$ and $W^\ve$ are bounded functions; more precisely, there exists $\ve_0$ such that for $0< \ve< \ve_0$
\begin{equation}
\label{estb1b2W}
\|b_1^\ve\|_{L^\infty}\leqslant c\frac{\de^\ve}{\ve}\;;\quad
\|b_2^\ve\|_{L^\infty}\leqslant c\frac{\de^\ve}{\ve^2}\:;\quad
\|W^\ve\|_{L^\infty}\leqslant c\frac{\de^\ve}{\ve}\,.
\end{equation}
We use the notation
\begin{equation*}
\hat R^\ve(z):=(\hat H^\ve-z)^{-1}\,.
\end{equation*}
From the Cauchy-Schwarz inequality and estimates \eqref{stimaresolvent}  and \eqref{estb1b2W} we have
\begin{multline}
\label{start}
\big|\big(g,\big[r_{m,n}^\ve(z)-\hat r_{m,n}^\ve(z)\big]f\big)_{L^2(\RE)}\big| \leqslant
\frac{c}{|\Im z|}  \|g\|_{L^2(\RE)}
\Bigg[
\frac{\de^\ve}{\ve}\bigg\|
\pd{^2}{s^2}\hat R^\ve\lf(z+\frac{\mu_n}{\de^{\ve\,2}}\ri)f\phi_n^\ve\bigg\|_{L^2}\\
+\frac{\de^\ve}{\ve^2}
\bigg\|
\pd{}{s}\hat R^\ve\lf(z+\frac{\mu_n}{\de^{\ve\,2}}\ri)f\phi_n^\ve\bigg\|_{L^2}
+\frac{\de^\ve}{\ve^3}
\bigg\|\hat R^\ve\lf(z+\frac{\mu_n}{\de^{\ve\,2}}\ri)f\phi_n^\ve\bigg\|_{L^2}
\Bigg]\,.
\end{multline}
The estimate of the third term on the right hand side of \eqref{start} comes directly from  \eqref{stimaresolvent}
\begin{equation}
\label{estW}
\bigg\|\hat R^\ve\lf(z+\frac{\mu_n}{\de^{\ve\,2}}\ri)f\phi_n^\ve\bigg\|_{L^2}
\leqslant |\Im z|^{-1}\|f\|_{L^2(\RE)}\,.
\end{equation}

In order to estimate the other two terms we prove that there exists $\ve_0$ such that for $0< \ve< \ve_0$
\begin{equation}
\label{stimona}
\lf\|
\pd{^2}{s^2}
\Big(\hat H^\ve-\frac{\mu_n}{\de^{\ve\,2}}-z\Big)^{-1}f\phi_n^\ve
\ri\|_{  L^2 } \leqslant c \ve^{-2} \|f \|_{  L^2(\erre) }
\end{equation}
To this aim we make use of an explicit formula for the resolvent $\hat R^\ve(z)$. The proof of this formula  is postponed to appendix \ref{appendice}.

We denote by $\hat H_0^\ve$ the self-adjoint operator in $L^2$ with the same formal expression as $\hat H^\ve$  but with domain characterized by boundary conditions not depending on the $s$ variable
\begin{equation}
\label{domhatH0ve}
\DD(\hat H_0^\ve):=\bigg\{\psi\in H^2(\Omega')\,\textrm{s.t.}\,
\pd{\psi}{u} (s ,  d ) + \al \psi (s , d ) =0\,,\;
-\pd{\psi}{u} (s , - d ) + \al\psi (s , -d ) =0
 \bigg\}\,,
\end{equation}
\begin{equation}
\label{hatH0ve}
\Hd^\ve_0:=-\pd{^2}{s^2}-\frac{1}{\de^{2\,\ve}}\pd{^2}{u^2}-\frac{1}{\ve^2}\frac{\ga^2(s/\ve)}{4}\,.
\end{equation}
Moreover for all $z\in\CO\backslash\RE $ we define
\begin{equation*}
\hat R^\ve_0(z):=(\hat H^\ve_0-z)^{-1}\,.
\end{equation*}
It will be crucial in the following that the derivatives of $\hat R^\ve_0(z)$ with respect to $s$ and with respect to $u$ commute.
In particular we notice that $-\pd{^2}{s^2}-\frac{1}{\ve^2}\frac{\ga^2(s/\ve)}{4}$ and 
$-\frac{1}{\de^{2\,\ve}}\pd{^2}{u^2}$ commute with $R^\ve_0(z)$.
Let us introduce some notation and state few preliminary results. Let $L^2 (\partial \Omega') = L^2(\erre) \oplus L^2(\erre)$ and $\underline{q} \in
 L^2(\erre) \oplus L^2(\erre)$  denote a couple of functions $q_i \in  L^2(\erre)$ for $i=1,2$, $\underline q$ has  to be understood as a column vector.

For  $i=1,2$ we define two operators $G_i^\ve(z):  L^2(\erre) \to L^2$ whose integral kernels are given by
\begin{equation*}
\begin{aligned}
& G_1^\ve(z)( s,u; s') := \hat R^\ve_0(z) ( s,u; s', d) \\
& G_2^\ve(z)( s,u; s') := \hat R^\ve_0(z) ( s,u; s', -d) 
\end{aligned}
\end{equation*}
and ${\mathcal G}^\ve(z): L^2(\partial\Omega')\to L^2$ given by
\begin{equation}
\label{mathcalGve}
{\mathcal G}^\ve(z)\underline{q} := 
\begin{pmatrix}
G_1^\ve(z) & 0\\ \\
0 & G_2^\ve(z)
\end{pmatrix} 
\begin{pmatrix}
q_1 \\ \\
q_2
\end{pmatrix} =
\begin{pmatrix}
G_1^\ve(z) q_1 \\ \\ 
G_2^\ve(z) q_2
\end{pmatrix}\,.
\end{equation}
The operators $G_i^\ve(z+\mu_n/\de^{\ve\,2})$, and therefore also ${\mathcal G}^\ve(z+\mu_n/\de^{\ve\,2})$, are uniformly bounded in $\ve$ for all $n=0,1,2,\dots$
\begin{equation}
\label{stimacalG}
\Big\|\mathcal G^\ve\lf(z+\frac{\mu_n}{\de^{\ve\,2}}\ri)\underline q\Big\|_{L^2}\leqslant
c\|\underline q\|_{L^2(\partial\Omega')}\,,
\end{equation}
the proof of this statement is in appendix \ref{appendice}.

We introduce also the operators $G_{i,j}^\ve(z):  L^2(\erre) \to L^2(\erre)$ for  $i,j=1,2$ 
whose integral kernels are given by
\begin{equation*}
\begin{aligned}
& G^\ve_{1,1}(z)( s; s'): =\hat R^\ve_0(z) ( s,d; s', d) \\
& G^\ve_{1,2}(z)( s; s') :=\hat R^\ve_0(z) ( s,d; s', -d) \\
& G^\ve_{2,1}(z)( s; s') :=\hat R^\ve_0(z) ( s,-d; s', d) \\
& G^\ve_{1,1}(z)( s; s') :=\hat R^\ve_0(z) ( s,-d; s', -d) \\
\end{aligned}
\end{equation*}
and the operator $\Gamma^\ve(z) : L^2(\partial\Omega')\to L^2(\partial\Omega')$ given by
\begin{equation}
\label{Gammave}
\Gamma^\ve(z): = 
\begin{pmatrix}
G^\ve_{1,1}(z) & G^\ve_{1,2}(z) \\ \\
G^\ve_{2,1}(z) & G^\ve_{2,2}(z)
\end{pmatrix}\,.
\end{equation}
The operators $G_{i,j}^\ve(z+\mu_n/\de^{\ve\,2})$ and, consequently, $\Gamma^\ve(z+\mu_n/\de^{\ve\,2})$, are uniformly bounded in $\ve$
\begin{equation}
\label{stimaGammave}
\big\|\Gamma^\ve(z+\mu_n/\de^{\ve\,2})\underline q\big\|_{L^2(\partial\Omega')}
\leqslant
c\|\underline q\|_{L^2(\partial\Omega')}\,.
\end{equation}
For the proof of this statement one can refer to the appendix.

The resolvent $\hat R^\ve(z)$ can be written in the following form, take  $\psi \in L^2$ then 
\begin{equation}
\hat R^\ve(z) \psi = \hat R^\ve_0(z) \psi + {\mathcal G}^\ve(z)\underline{q}^\ve
\label{risolvente}
\end{equation}
where  $\underline{q}^\ve$ is defined by
\begin{equation}
\label{qve}
\underline{q}^\ve =
-\Lambda^\ve(z){\mathcal G}^{\ve\,\ast}\lf(z\ri) \psi
\qquad
\Lambda^\ve(z):=
\big(I+(\underline\al-\underline\al^\ve)\Gamma^\ve(z)\big)^{-1}(\underline\al-\underline\al^\ve)\,,
\end{equation}
where $^*$ denotes the adjoint and   $(\underline{\al} - \underline{\al}^\ve ) $ is multiplication operator
\begin{equation*}
(\underline\al-\underline\al^\ve)=
\begin{pmatrix}
(\al-\al_1^\ve) & 0 \\ \\
0 & (\al-\al_2^\ve)
\end{pmatrix} \,.
\end{equation*}
The operator $(I + (\underline{\al} - \underline{\al}^\ve ) \Gamma^\ve(z) )^{-1}$ in \eqref{qve} is well defined by its Neumann series for $\ve$ sufficiently small, see appendix \ref{appendice}. Now we go back to the proof of \eqref{stimona}. Using \eqref{risolvente} we have
\begin{equation}
\label{kumo}
\hat R^\ve \lf(z+ \frac{\mu_n}{\dde} \ri) f \phi_n^\ve = \hat R^\ve_0\lf(z+ \frac{\mu_n}{\dde}\ri) f \phi_n^\ve + 
{\mathcal G}^\ve\lf(z+ \frac{\mu_n}{\dde}\ri)\underline{q}^\ve_n
\end{equation}
with
\begin{equation*}
\underline{q}^\ve_n=-\Lambda^\ve\lf(z+ \frac{\mu_n}{\dde}\ri){\mathcal G}^{\ve\,\ast}\lf(z+ \frac{\mu_n}{\dde}\ri) f \phi_n^\ve \,.
\end{equation*}
In appendix \ref{appendice} it is proved that for $0<\ve<\ve_0$, 
\begin{equation}
\label{ultima}
\|\underline q^\ve_n\|_{L^2(\partial\Omega')}\leqslant c \frac{\de^\ve}{\ve} \|f\|_{L^2(\RE)}\,.
\end{equation}
 We first study the term of second derivative coming from the resolvent $\hat R_0^\ve(z)$. The following equality holds 
\begin{equation*}
\pd{^2}{u^2}\hat R^\ve_0(z;s,u,s',u')=
\pd{^2}{{u'}^2}\hat R^\ve_0(z;s,u,s',u')\,,
\end{equation*}
where $\hat R^\ve_0(z;s,u,s',u')$ is the integral kernel of $\hat R^\ve_0(z)$. Therefore integrating by parts
we have 
\begin{equation*}
\begin{aligned}
\bigg[\pd{^2}{u^2}\hat R^\ve_0(z)\phi_n^\ve f\bigg](s,u)=&
\int ds'\,du'\bigg[\pd{^2}{{u'}^2}\hat R^\ve_0(z;s,u,s',u')\bigg]
\phi_n^\ve(s',u')f(s')\\
=&-\big[\hat R^\ve_0(z)\la_n^\ve \phi_n^\ve f \big](s,u)+
\int ds'\hat R^\ve_0(z;s,u,s',d)(\al^\ve_1(s')-\al)\phi_n^\ve(s',d)f(s')\\
&+\int ds'\hat R^\ve_0(z;s,u,s',-d)(\al^\ve_2(s')-\al)\phi_n^\ve(s',-d)f(s')\\
=&-\big[\hat R^\ve_0(z)\la_n^\ve \phi_n^\ve f\big](s,u)
+\big[\mathcal{G}^{\ve}( z)(\underline{\al}^\ve-\underline{\al})\underline\phi_n^\ve(d)f\big](s,u)
\end{aligned}
\end{equation*}
where we used the notation $\underline\phi_n^\ve(s,d)=(\phi_n^\ve(s,d),\phi_n^\ve(s,-d))$. Then we have
\begin{equation*}
\begin{aligned}
\pd{^2}{s^2}\hat R_0^\ve(z+\mu_n/\de^{\ve\,2})  \phi_n^\ve f
=&- \phi_n^\ve f+ \lf(-\frac{1}{\de^{\ve\,2}}\pd{^2}{u^2} -\frac{1}{\ve^2}\frac{\ga^2(\cdot / \ve) }{4}
-\frac{\mu_n}{\de^{\ve\,2}}-z\ri)\hat R_0^\ve(z+\mu_n/\de^{\ve\,2})  \phi_n^\ve f \\
=&- \phi_n^\ve f+ \lf(-\frac{1}{\ve^2}\frac{\ga^2(\cdot / \ve) }{4}-z\ri)\hat R_0^\ve(z+\mu_n/\de^{\ve\,2})  \phi_n^\ve f \\
&+\hat R^\ve_0(z+\mu_n/\de^{\ve\,2})\frac{(\la_n^\ve-\mu_n)}{\de^{\ve\,2}} \phi_n^\ve f
-\mathcal{G}^{\ve}( z+\mu_n/\de^{\ve\,2})(\underline{\al}^\ve-\underline{\al})\underline\phi_n^\ve(d) f
\end{aligned}
\end{equation*}
Since $\|\ga\|_{L^\infty(\RE)}<c$ and $a>1$, then there exists $\ve_0>0$ such that, for all $0<\ve<\ve_0$, $\|\eta\|_{L^{\infty}(\erre)} \leqslant c \ve^{a-1}$ and therefore the perturbative expansion \eqref{lan2}
can be applied. This implies that
\begin{equation*}
\lf\| \frac{\la_n^\ve(\cdot) - \mu_n}{\dde} \ri\|_{L^{\infty}(\erre) } \leqslant c \frac{1}{\ve^2}\,.
\end{equation*}
Moreover from $\|(\underline{\al}^\ve-\underline{\al})\|_{L^\infty(\partial\Omega')}\leqslant c\de^\ve/\ve$ and estimate \eqref{stimacalG} we have that  there exists $\ve_0$ such that for all $0<\ve<\ve_0$
\begin{equation}
\lf\|
\pd{^2}{s^2} \hat R^\ve_0\lf(z+ \frac{\mu_n}{\dde}\ri) f \phi_n^\ve
\ri\|_{L^2 } 
\leqslant
c\Big(1+\frac{1}{\ve^2}+\frac{\de^{\ve}}{\ve}\Big)\|f\|_{L^2(\RE)}
\leqslant c \frac{1}{\ve^2} \|f \|_{L^2(\erre) }\,.
\label{tappo}
\end{equation}
Now we estimate the second term coming from the r.h.s. of \eqref{kumo}. We can write it in the
following equivalent way
\begin{equation*}
{\mathcal G}^\ve\lf(z+ \frac{\mu_n}{\dde}\ri)\underline{q}^\ve_n = -
{\mathcal G}^\ve\lf(z+ \frac{\mu_n}{\dde}\ri)
\Lambda^\ve\lf(z+ \frac{\mu_n}{\dde}\ri) 
{\mathcal G}^{\ve\,\ast}\lf(z+ \frac{\mu_n}{\dde}\ri) f \phi_n^\ve\,.
\end{equation*}
We introduce the notation
\begin{equation*}
\hat h^\ve:=-\pd{^2}{s^2}-\frac{\gamma(s/\ve)^2}{4\ve^2}\,.
\end{equation*}
Let us notice that $\hat h^\ve$ commutes with $\hat R^\ve_0(z)$ and $\Gamma^\ve(z)$. Using the commutation property of $\hat h^\ve$ we have
\begin{equation}
\label{quasi}
\begin{aligned}
\pd{^2}{s^2} {\mathcal G}^\ve\lf(z+ \frac{\mu_n}{\dde}\ri)\underline{q}^\ve_n =&
-\lf( \frac{1}{\ve^2}\frac{\ga^2(\cdot / \ve) }{4} +z\ri) {\mathcal G}^\ve\lf(z+ \frac{\mu_n}{\dde}\ri)\underline{q}^\ve_n \\
&-{\mathcal G}^\ve\lf(z+ \frac{\mu_n}{\dde}\ri) 
\lf( \hat h^\ve -z\ri)\Lambda^\ve\lf(z+ \frac{\mu_n}{\dde}\ri)  
{\mathcal G}^{\ve\,\ast}\lf(z+ \frac{\mu_n}{\dde}\ri) f \phi_n^\ve\,.
\end{aligned}
\end{equation}
By estimate \eqref{stimacalG} and \eqref{ultima} one can see that the first term on the r.h.s. of \eqref{quasi} is bounded by  $c\de^\ve/\ve^{3} \| f\|_{L^2(\erre)}$. The second term of \eqref{quasi} can be written in the following way
\begin{equation*}
\begin{aligned}
&{\mathcal G}^\ve\lf(z+ \frac{\mu_n}{\dde}\ri) 
\lf(\hat h^\ve -z\ri)\Lambda^\ve\lf(z+ \frac{\mu_n}{\dde}\ri)  
{\mathcal G}^{\ve\,\ast}\lf(z+ \frac{\mu_n}{\dde}\ri) f \phi_n^\ve\\
=&{\mathcal G}^\ve\lf(z+ \frac{\mu_n}{\dde}\ri) 
\lf(\hat h^\ve-z\ri) 
\Lambda^\ve\lf(z+ \frac{\mu_n}{\dde}\ri) 
\lf(\hat h^\ve -z\ri)^{-1} 
\lf( \hat h^\ve -z\ri) 
{\mathcal G}^{\ve\,\ast}\lf(z+ \frac{\mu_n}{\dde}\ri) f \phi_n^\ve\,. 
\end{aligned}
\end{equation*}
By an argument similar to the one used in the proof of estimate  \eqref{tappo} one can prove that
\begin{equation}
\lf\|
\lf(\hat h^\ve-z\ri) 
{\mathcal G}^{\ve\,\ast}\lf(z+ \frac{\mu_n}{\dde}\ri) f \phi_n^\ve 
\ri\|_{L^2(\partial\Omega')} \leqslant c \frac{1}{\ve^2} \| f \|_{L^2(\RE)}\,.
\label{punto2}
\end{equation}
Moreover the following identity holds
\begin{equation*}
\begin{aligned}
&\lf(\hat h^\ve-z\ri)\Lambda^\ve  
\lf(z+ \frac{\mu_n}{\dde}\ri) 
 \lf(\hat h^\ve-z\ri)^{-1} \\
  =&\lf[ I + \lf(\hat h^\ve -z\ri) 
  (\underline{\al} - \underline{\al}^\ve )  
\lf( \hat h^\ve-z\ri)^{-1}
\Gamma^\ve\lf(z+ \frac{\mu_n}{\dde}\ri)   
\ri]^{-1}
\lf(\hat h^\ve -z\ri) 
  (\underline{\al} - \underline{\al}^\ve )  
\lf(\hat h^\ve -z\ri)^{-1} \,.
\end{aligned}
\end{equation*}
Using the Leibniz rule, we see that
\begin{equation*}
\lf\|
\lf(\hat h^\ve-z\ri) 
  (\underline{\al} - \underline{\al}^\ve )  
\lf( \hat h^\ve -z\ri)^{-1}
\ri\|_{\BB(L^2(\partial\Omega') , L^2(\partial\Omega')) } 
\leqslant c\Big(
\frac{\de^\ve}{\ve}+\frac{\de^\ve}{\ve^{2}}+\frac{\de^\ve}{\ve^3}\Big) \leqslant c \frac{\de^\ve}{\ve^3}\,.
\end{equation*}
Then there  exists $\ve_0$ such that
for $0<\ve < \ve_0$ 
\begin{equation}
\lf\|\lf(\hat h^\ve-z\ri)\Lambda^\ve  
\lf(z+ \frac{\mu_n}{\dde}\ri) 
 \lf(\hat h^\ve-z\ri)^{-1}
\ri\|_{\BB(L^2(\partial\Omega') , L^2(\partial\Omega') )}\leqslant c \frac{\de^\ve}{\ve^3}\,.
\label{punto1}
\end{equation}
Therefore it follows from estimates \eqref{stimacalG} and 
\eqref{punto2} and \eqref{punto1} that
\begin{equation}
\label{sonno}
\lf\|\pd{^2}{s^2}
{\mathcal G}^\ve\lf(z+ \frac{\mu_n}{\dde}\ri)\underline{q}^\ve
\ri\|_{L^2} 
\leqslant c \frac{\de^\ve}{\ve^5}\|f \|_{  L^2(\erre) }\,.
\end{equation}
Then for $a>3$,  \eqref{stimona} follows from \eqref{sonno} and \eqref{tappo}.
Interpolating \eqref{stimona} and using the $L^2$ boundedness of
$(\hat H^\ve-\frac{\mu_n}{\de^{\ve\,2}}-z)^{-1}f\phi_n^\ve $ we immediately obtain
\begin{equation}
\lf\|
\pd{}{s}
\Big(\hat H^\ve-\frac{\mu_n}{\de^{\ve\,2}}-z\Big)^{-1}f\phi_n^\ve
\ri\|_{  L^2 } \leqslant c \frac{1}{\ve} \|f \|_{  L^2(\erre) }\,.
\label{stimina}
\end{equation}

The proof of \eqref{goal} comes from \eqref{start}, \eqref{estW}, \eqref{stimona} and \eqref{stimina}.
\end{proof}
Let us consider the family of self-adjoint operators $\hh^\ve_n:H^2(\RE)\subset L^2(\RE)\to L^2(\RE)$, $n=0,1,2,\dots$ defined in \eqref{hven}. The following lemma concludes the first step in the proof of theorem \ref{mainth}; it shows that
in the limit only the diagonal elements of the reduced resolvent survive and that the leading  term in the reduced Hamiltonian is $h_n^\ve$.

\begin{lemma}
\label{lemma2}
Let $\ga\in C_0^\infty(\RE)$ and  $a>3$ then for all $m,n=0,1,2,\dots$ and for all $z\in\CO\backslash\RE$,
\begin{equation*}
\ulim_{\ve\to0}
\big(\rbar_{m,n}^\ve(z)-\de_{m,n}(\hh^\ve_n-z)^{-1}\big)=0\,.
\end{equation*}
\end{lemma}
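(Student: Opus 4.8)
The plan is to realise the $n$-th transverse mode as an almost-invariant subspace of $\Hd^\ve$ and to identify the operator it carries with $\hh_n^\ve$ of \eqref{hven}. First I would introduce the map $J_n^\ve:L^2(\RE)\to L^2$ defined by $(J_n^\ve g)(s,u):=g(s)\,\phi_n(\underline\al^\ve(s),u)$. Since for every fixed $s$ the function $\phi_n(\underline\al^\ve(s),\cdot)$ is the normalized $n$-th eigenfunction of $\hhd_{\al_1^\ve(s),\al_2^\ve(s)}$, it satisfies the transverse Robin conditions with parameters $\al_1^\ve(s),\al_2^\ve(s)$; hence $J_n^\ve g$ fulfils the $s$-dependent boundary conditions in \eqref{domhami}, and $J_n^\ve$ maps $H^2(\RE)$ into $\DD(\Hd^\ve)$. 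Each $J_n^\ve$ is an isometry and, because the transverse eigenfunctions are orthonormal at every $s$, $(J_m^\ve)^\ast J_n^\ve=\de_{m,n}I$. With this notation the reduced resolvent is exactly $\rbar_{m,n}^\ve(z)=(J_m^\ve)^\ast\big(\Hd^\ve-\mu_n/\dde-z\big)^{-1}J_n^\ve$.

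Next I would compute the commutation defect of $\Hd^\ve$ with $J_n^\ve$. Applying $\Hd^\ve$ to $J_n^\ve g$, using $-\frac{1}{\dde}\pd{^2}{u^2}\phi_n^\ve=\frac{\la_n^\ve(s)}{\dde}\phi_n^\ve$ and the Leibniz rule for $-\pd{^2}{s^2}(g\phi_n^\ve)$, one obtains the intertwining relation
\[
\Hd^\ve J_n^\ve=J_n^\ve\Big(\hh_n^\ve+\frac{\mu_n}{\dde}\Big)+E_n^\ve,\qquad E_n^\ve g=\rho_n^\ve\,g\,\phi_n^\ve-2g'\,\pd{\phi_n^\ve}{s}-g\,\pd{^2\phi_n^\ve}{s^2}.
\]
Here $\rho_n^\ve(s)$ is defined through $\frac{\la_n^\ve(s)}{\dde}-\frac{\ga^2(s/\ve)}{4\ve^2}=\frac{\mu_n}{\dde}+\frac{\bt_n}{\ve^2}\ga^2(s/\ve)+\rho_n^\ve(s)$: the choice $\bt_n=-1/4+\la_n^{(2)}$ of \eqref{btn} is precisely the one that cancels the two leading $O(\ve^{-2})$ contributions, so that $\rho_n^\ve$ is the part of the expansion \eqref{lan2} beyond second order. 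Since $\|\eta^\ve\|_{L^\infty}\leqslant c\,\ve^{a-1}\to0$, that expansion applies uniformly and, with $\eta^\ve=(\de^\ve/\ve)\ga(\cdot/\ve)$, yields $\|\rho_n^\ve\|_{L^\infty}\leqslant c\,\de^\ve/\ve^3=c\,\ve^{a-3}$.

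Then I would feed the intertwining relation into a resolvent identity. From $(\Hd^\ve-\mu_n/\dde-z)J_n^\ve=J_n^\ve(\hh_n^\ve-z)+E_n^\ve$, multiplying on the left by $(\Hd^\ve-\mu_n/\dde-z)^{-1}$ and on the right by $(\hh_n^\ve-z)^{-1}$, then applying $(J_m^\ve)^\ast$ and using $(J_m^\ve)^\ast J_n^\ve=\de_{m,n}I$, I get
\[
\rbar_{m,n}^\ve(z)-\de_{m,n}(\hh_n^\ve-z)^{-1}=-(J_m^\ve)^\ast\Big(\Hd^\ve-\frac{\mu_n}{\dde}-z\Big)^{-1}E_n^\ve(\hh_n^\ve-z)^{-1}.
\]
Because $\|(J_m^\ve)^\ast\|\leqslant1$ and $\|(\Hd^\ve-\mu_n/\dde-z)^{-1}\|\leqslant|\Im z|^{-1}$ by \eqref{stimaresolvent}, it remains to show $\|E_n^\ve(\hh_n^\ve-z)^{-1}\|_{\BB(L^2(\RE),L^2)}\leqslant c\,\ve^{a-3}$. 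Setting $g=(\hh_n^\ve-z)^{-1}f$, the three terms of $E_n^\ve$ are controlled separately: the diagonal term by $\|\rho_n^\ve\|_{L^\infty}\|g\|\leqslant c\,\ve^{a-3}\|f\|$; the first-derivative term by $\|\pd{\phi_n^\ve}{s}\|_{L^\infty}\|g'\|$, where $\|\pd{\phi_n^\ve}{s}\|_{L^\infty}\leqslant c\,\de^\ve/\ve^2$ and the energy estimate
\[
\Big\|\pd{}{s}(\hh_n^\ve-z)^{-1}f\Big\|^2=\Re(g,f)+\Re z\,\|g\|^2-\frac{\bt_n}{\ve^2}\int_\RE\ga^2(s/\ve)|g|^2\,ds\leqslant\frac{c}{\ve^2}\|f\|^2
\]
(valid because $\hh_n^\ve\geqslant-c/\ve^2$) gives $\|g'\|\leqslant c\,\ve^{-1}\|f\|$, whence this term is $O(\de^\ve/\ve^3)=O(\ve^{a-3})$; the second-derivative term by $\|\pd{^2\phi_n^\ve}{s^2}\|_{L^\infty}\|g\|\leqslant c(\de^\ve/\ve^3)\|f\|$. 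All three are $O(\ve^{a-3})$, so for $a>3$ the right-hand side tends to zero in $\BB(L^2(\RE),L^2(\RE))$, which is the asserted $\ulim$.

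The main obstacle is the uniform control of the transverse eigenfunctions and their $s$-derivatives. As $\phi_n^\ve(s,\cdot)=\phi_n(\underline\al^\ve(s),\cdot)$ depends on $s$ only through $\al_1^\ve(s),\al_2^\ve(s)$, the chain rule gives $\pd{\phi_n^\ve}{s}=\sum_i\partial_{\al_i}\phi_n\,(\al_i^\ve)'$ and an analogous expression for the second derivative; the bounds used above then follow from $(\al_i^\ve)'=O(\de^\ve/\ve^2)$ and $(\al_i^\ve)''=O(\de^\ve/\ve^3)$, a direct consequence of $\eta^\ve=(\de^\ve/\ve)\ga(\cdot/\ve)$ and \eqref{alves}, \emph{provided} $\partial_{\al_i}\phi_n$ and $\partial^2_{\al_i\al_j}\phi_n$ are bounded in $L^2((-d,d))$ uniformly for $\underline\al$ in a neighbourhood of $(\al,\al)$. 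Establishing this regularity is the one genuinely delicate point: it rests on the simplicity of the spectrum of $\hhd_\al$ and on the analytic dependence of the eigenpair $(\la_n,\phi_n)$ on the Robin parameters, both of which can be read off the explicit formulae \eqref{reso}--\eqref{phin}. Once this is in hand, everything else is a bookkeeping of powers of $\ve$, and the resulting rate $\ve^{a-3}$ matches the one already obtained in Lemma \ref{lemma1}.
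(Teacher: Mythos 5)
Your proof is correct, but it follows a genuinely different route from the paper's. The paper never introduces an identification operator or computes a commutator: it inserts the intermediate Hamiltonian $\hat h_n^\ve=-\frac{d^2}{ds^2}-\frac{\ga^2(s/\ve)}{4\ve^2}+\frac{\la_n^\ve(s)-\mu_n}{\dde}$ and splits the error by the triangle inequality into two pieces. The piece $\rbar_{m,n}^\ve(z)-\de_{m,n}(\hat h_n^\ve-z)^{-1}$ is controlled by a kernel-replacement argument: the transverse eigenfunction evaluated at the source point, $\phi_n(\underline\al^\ve(s'))$, is swapped for $\phi_n(\underline\al^\ve(s))$ and for $\xi_n$ inside the integral kernel (the functions $\zeta_1^f,\zeta_2^f,\zeta_0^f$), and the swap errors are estimated through a line-integral (Taylor) representation in the Robin parameters, which uses only the sup bound $\|\underline\al^\ve-\underline\al\|_{L^\infty}\leqslant c\,\de^\ve/\ve$ and the parameter-smoothness \eqref{cercare}; no $s$-derivative of $\phi_n^\ve$ ever appears. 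The remaining piece $(\hat h_n^\ve-z)^{-1}-(\hh_n^\ve-z)^{-1}$ is handled by the first resolvent identity together with the expansion \eqref{lan2}, which is exactly the role played by your $\rho_n^\ve$ term. By contrast, you compute the exact commutation defect $E_n^\ve$ of $\Hd^\ve$ against $J_n^\ve$ and obtain the single closed identity $\rbar_{m,n}^\ve(z)-\de_{m,n}(\hh_n^\ve-z)^{-1}=-(J_m^\ve)^\ast(\Hd^\ve-\mu_n/\dde-z)^{-1}E_n^\ve(\hh_n^\ve-z)^{-1}$, at the price of needing uniform-in-$s$ bounds on $\partial_s\phi_n^\ve$ and $\partial_s^2\phi_n^\ve$ in $L^2((-d,d))$ of order $\de^\ve/\ve^2$ and $\de^\ve/\ve^3$ (hence first and second derivatives of $\al_i^\ve$, available here since $\ga\in C_0^\infty$), plus the a priori estimate $\|\partial_s(\hh_n^\ve-z)^{-1}f\|_{L^2(\RE)}\leqslant c\,\ve^{-1}\|f\|_{L^2(\RE)}$. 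Both arguments rest on the same two inputs, namely \eqref{cercare} and \eqref{lan2}, and both produce the rate $\ve^{a-3}$ and therefore the same threshold $a>3$. What your version buys is structural transparency: the error is an exact operator identity estimated term by term. What the paper's version buys is economy of hypotheses on the $s$-dependence: only Lipschitz-in-parameter information about the transverse modes is used, and no energy estimate on the resolvent of $\hh_n^\ve$ is needed.
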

\begin{proof}
It is sufficient to prove that for all $m,n=0,1,2,\dots$, $f,g\in C_0^\infty(\RE)$ and $z\in\CO\backslash\RE$, there exists $\ve_0>0$ such that, for all $0<\ve<\ve_0$,
\begin{equation*}
\big|\big(g,(\rbar_{m,n}^\ve(z)-\de_{m,n}(\hh^\ve_n-z)^{-1})f\big)_{L^2(\RE)}\big|
\leqslant c \ve^{a-3} \|g\|_{L^2(\RE)}\|f\|_{L^2(\RE)}\,.
\end{equation*}
It is convenient to introduce an intermediate Hamiltonian $\hat{h}_n^\ve$ which is the compression of
$\hat H^\ve$ to the subspace generated by $\phi_n^\ve$. We define
\begin{equation*}
\hat h_n^\ve = -\frac{d^2}{ds^2}-\frac{1}{\ve^2}\frac{\ga^2(s/\ve)}{4}+\frac{\la_n^\ve(s)-\mu_n}{\dde}\,.
\end{equation*}
Since
\begin{equation}
\label{ineq1}
\begin{aligned}
&\big|\big(g,(\rbar_{m,n}^\ve(z)-\de_{m,n}(\hh^\ve_n-z)^{-1})f\big)_{L^2(\RE)}\big|\\
\leqslant& 
\big|\big(g,(\rbar_{m,n}^\ve(z)-\de_{m,n}(\hat h_n^\ve-z)^{-1} )f\big)_{L^2(\RE)}\big|+
\de_{m,n}\big|\big(g,(\hat h_n^\ve - z)^{-1}-(\hh^\ve_n-z)^{-1})f\big)_{L^2(\RE)}\big|\,,
\end{aligned}
\end{equation}
it is sufficient to estimate separately the two terms on the right hand side of \eqref{ineq1}.

We notice that
\begin{equation*}
\phi_n(\underline{\al}^\ve(s))=\xi_n+\varphi^\ve_n(s)
\end{equation*}
where $\xi_n$ are the eigenfunctions of the symmetric Robin Laplacian in $(-d,d)$, see \eqref{xin1} - \eqref{xin2}
and $\varphi^\ve_n(s)$ by elementary calculus is given by the following line integral in $\erre^2$
\begin{equation}
\label{344}
\varphi^\ve_n(s)= \int_{[ \underline{\al} , \underline{\al}^\ve(s)]  } \nabla_{\underline\nu}\phi_n(\underline{\nu}) 
\cdot d\underline{\nu}
\end{equation}
where with a small abuse of notation we have denoted the segment in $\erre^2$ between $\underline{\al}$ and 
$\underline{\al}^\ve(s)$ by $[ \underline{\al} , \underline{\al}^\ve(s)]$. Notice that 
\begin{equation}
\|\al_j^\ve -\al \|_{L^{\infty}(\erre)} \leqslant c \de^\ve / \ve
\qquad \qquad
j=1,2
\,,
\label{spect}
\end{equation}
which implies $\de\!{\al} 
\equiv \sup_{s\in\RE}|  \underline{\al} - \underline{\al}^\ve(s) | \leqslant c \de^\ve / \ve$.
The eigenstates $\phi_n(\underline \nu)$ are $L^2((-d,d))$-valued smooth functions of $\underline \nu\in\RE^2$,
in particular $\| \phi_n(\underline \nu  ) \|_{L^2((-d,d)) }=1$ and for all $\underline\al\in \RE^2$ and $0<\ve<\ve_0$ there exists an open ball $B(\underline\al,R^\ve)\subset\RE^2$ with center in $\underline \al$ and radius $R^\ve<c\de^\ve/\ve$ such that 
\begin{equation}
\label{cercare}
\sup_{\underline \nu \in  B(\underline\al,R^\ve)} \|\partial_{\nu_i} \phi_n(\underline \nu  ) \|_{L^2((-d,d)) } \leqslant c\;,
\qquad  
\sup_{\underline \nu \in B(\underline\al,R^\ve)} \|\partial_{\nu_i}\partial_{\nu_j} \phi_n(\underline \nu  ) \|_{L^2((-d,d)) } \leqslant c
\end{equation}
for $i,j=1,2$.

Let $T$ be a given bounded operator in $L^2$ and let $T(s,u;s',u')$ be its integral kernel. 
For fixed $f \in L^2(\erre)$ we introduce 
\begin{align*}
\zeta_1^f (s,u) &=  \int_{\Omega'} ds' du' \,T(s,u;s',u')\phi_n(\underline{\al}^\ve(s'),u') f(s') \\
\zeta_2^f (s,u) &=  \int_{\Omega'} ds' du' \,T(s,u;s',u')\phi_n(\underline{\al}^\ve(s),u') f(s') 
\end{align*}
and we want to prove
\begin{equation*}
\| \zeta_1^f - \zeta_2^f \|_{L^2} \leqslant c \de^\ve / \ve \| f\|_{L^2(\erre)}\,.
\end{equation*}
It is convenient to introduce also
\begin{equation*}
\zeta_0^f (s,u) =  \int_{\Omega'} ds' du' \,T(s,u;s',u')\xi_n(u') f(s') \\
\end{equation*}
and separately estimate $\| \zeta_1^f - \zeta_0^f \|_{L^2}$ and $\| \zeta_2^f - \zeta_0^f \|_{L^2}$.
The estimate of $\| \zeta_1^f - \zeta_0^f \|_{L^2}$ trivially follows from the boundedness of $T$ and \eqref{cercare}:
\begin{equation*}
\| \zeta_1^f - \zeta_0^f \|_{L^2}\leqslant 
\|T\|_{\BB(L^2,L^2)} \| (\phi_n -\xi_n) f\|_{L^2} \leqslant c  \de^\ve / \ve \| f\|_{L^2(\erre)}\,.
\end{equation*}
The estimate of $\| \zeta_2^f - \zeta_0^f \|_{L^2}$ requires a more careful analysis: interchanging integrals,
we have
\begin{align}
\label{tac}
\lf(\zeta_2^f - \zeta_0^f \ri) (s,u) & = 
\int_{\al}^{\al_1^\ve (s)} d\nu 
\int_{\Omega'} ds' du' \,T(s,u;s',u') \partial_\nu \phi_n (\nu,\al,u') f(s') + \\ \nonumber
& +\int_{\al}^{\al_2^\ve (s)} d\nu 
\int_{\Omega'} ds' du' \,T(s,u;s',u') \partial_\nu \phi_n (\al,\nu,u') f(s') + \\ \nonumber
& +\int_{\al}^{\al_1^\ve (s)} d\nu_1 
\int_{\al}^{\al_2^\ve (s)} d\nu_2
\int_{\Omega'} ds' du' \,T(s,u;s',u') \partial_{\nu_1} \partial_{\nu_2}\phi_n (\nu_1,\nu_2,u') f(s')\,.
\end{align}
Using \eqref{spect}, \eqref{cercare} and \eqref{tac} it is straightforward to prove
\begin{equation*}
\| \zeta_2^f - \zeta_0^f \|_{L^2} \leqslant 
c \de^\ve/ \ve \|T \|_{\BB(L^2,L^2)} \|f\|_{L^2(\erre)}\,.
\end{equation*}

Let us come back to the proof of \eqref{ineq1}. We notice the following useful identity 
\begin{equation*}
\delta_{m,n}\big(g,(\hat h_n^\ve-z)^{-1} f\big)_{L^2(\RE)} =
\lf( g \phi_m , \zeta_2^f \ri)_{L^2}
\end{equation*}
with $T = \lf(\Hd^\ve-\mu_n/\dde-z\ri)^{-1}$. Therefore we have
\begin{equation*}
\big(g,(\rbar_{m,n}^\ve(z)-\de_{m,n}(\hat h_n^\ve-z)^{-1})f\big)_{L^2(\RE)}=
\lf( g \phi_m, (\zeta_1^f - \zeta_2^f) \ri)_{L^2}
\end{equation*}
and from the Cauchy-Schwarz inequality we get
\begin{equation}
\label{stima1}
\begin{aligned}
&\big|\big(g,(\rbar_{m,n}^\ve(z)-\de_{m,n}(\hat h_n^\ve-z)^{-1})f\big)_{L^2(\RE)}\big|=
\lf|\lf( g \phi_m, (\zeta_1^f - \zeta_2^f) \ri)_{L^2}\ri|  \\
\leqslant&\| g\|_{L^2(\erre)}\| \zeta_1^f - \zeta_2^f \| \leqslant c \de^\ve / \ve 
| \Im z|^{-1}\| g\|_{L^2(\erre)} \| f\|_{L^2(\erre)}
\end{aligned}
\end{equation}
and this concludes the  estimate of the first term at the right hand side of \eqref{ineq1}.

Let us consider now the second term at the right hand side of equation \eqref{ineq1}. From the first resolvent identity we get
\begin{equation*}
\big|\big(g,\big((\hat h_n^\ve-z)^{-1}-(\hh^\ve_n-z)^{-1}\big)f\big)_{L^2(\RE)}\big|=
\bigg|
\bigg(g,(\hat h_n^\ve-z)^{-1}\bigg[\frac{\la_n^\ve(\cdot)-\mu_n-\la_n^{(2)}\eta^{\ve}(\cdot)^2}{\dde}\bigg]
(\hh^\ve_n-z)^{-1}f\bigg)_{L^2(\RE)}\bigg|\,.
\end{equation*}
By the Cauchy-Schwarz inequality we have
\begin{equation}
\label{stima2}
\big|\big(g,\big((\hat h_n^\ve-z)^{-1}-(\hh^\ve_n-z)^{-1}\big)f\big)_{L^2(\RE)}\big|
\leqslant
|\Im z |^{-2}
\bigg\|\frac{\la_n^\ve(\cdot)-\mu_n-\la_n^{(2)}\eta^{\ve}(\cdot)^2}{\dde}\bigg\|_{L^\infty(\RE)}
\|f\|_{L^2(\RE)}\|g\|_{L^2(\RE)}\,.
\end{equation}
Since $\|\ga\|_{L^\infty(\RE)}<c$ and $a>1$, then there exists $\ve_0>0$ such that, for all $0<\ve<\ve_0$, $\|\eta\|_{L^{\infty}(\erre)} \leqslant c \ve^{a-1}$ and therefore the perturbative expansion \eqref{lan2}
can be applied. This implies that
\begin{equation*}
\|\la_n^\ve(\cdot)-\mu_n-\la_n^{(2)}\eta^{\ve}(\cdot)^2\|_{L^\infty(\RE)}\leqslant c\ve^{3(a-1)}\,.
\end{equation*}
Then we have
\begin{equation*}
\bigg\|
\frac{\la_n^\ve(\cdot)-\mu_n-\la_n^{(2)}\eta^{\ve}(\cdot)^2}{\dde}\bigg\|_{L^\infty(\RE)}
\leqslant
c\ve^{a-3}\,.
\end{equation*} 
From the last estimate and from equation  \eqref{stima2} we get
\begin{equation*}
\big|\big(g,\big((\hat h_n^\ve-z)^{-1}-(\hh^\ve_n-z)^{-1}\big)f\big)_{L^2(\RE)}\big|
\leqslant
\ve^{a-3} c |\Im(z)|^{-2}
\|f\|_{L^2(\RE)}\|g\|_{L^2(\RE)}\,,
\end{equation*}
that together with equation \eqref{stima1} concludes the proof of the lemma.
\end{proof}

{\bf Proof of theorem \ref{mainth}}

The proof of theorem \ref{mainth} follows directly from lemma \ref{lemma1}, lemma \ref{lemma2}
and proposition \eqref{prop1}.  \hfill $\square$ \newline

\section{Small deformations of the curvature \label{sec5}}
\setcounter{equation}{0}

In this section we give a generalization of the previous results: we show that deformations of the order of $\ve$ of the angle $\theta$ can lead to a more general coupling in the vertex.

Let us consider the following scaling for the curvature 
\begin{equation}
\label{scalinggamma2}
\ga(s) \longrightarrow \,\,\frac{1}{\ve}\widetilde\ga^\ve\lf(\frac{s}{\ve}\ri)=\frac{\sqrt{1+2 \ve b}}{\ve} \ga\lf( \frac{s}{\ve} \ri) 
\qquad\ve>0\,,
\end{equation}
where $b$ is a real constant. With this scaling the angle $\theta$ between the straight parts of the curve  $l_1$ and $l_2$ is
\begin{equation*}
\theta^\ve=\int_\RE\widetilde\gamma^\ve(s)ds=\sqrt{1+2 \ve b}\,\theta=(1+\ve b)\theta+\OO(\ve^2)\,,
\end{equation*}
then the  scaling \eqref{scalinggamma2} can be interpreted as a deformation, of order $\ve$, of the geometric parameter $\theta$.

Consider the family of one dimensional Hamiltonians with scaled potential of the form
\begin{equation}
\label{barhveb}
\widetilde h^\ve:=-\frac{d^2}{ds^2}+\frac{1+\ve b}{\ve^2}v(s/\ve)\,.
\end{equation}
If  exists a zero energy resonance for the Hamiltonian $h=-\frac{d^2}{ds^2}+v$  one can define two constants  $c_-$ and $c_+$ as it was done in \eqref{c1c2}. The  family of Hamiltonians 
\begin{equation}
\label{domhrb}
\begin{aligned}
{\mathscr D}(\widetilde h_r) := \{ f\in H^2(\erre \setminus 0 ) \,\, s.t. \,\,&
c_- f(0^+ ) = c_+ f(0^- )\, ,\\
&c_+ f' (0^+ ) -c_- f' (0^- )=\hat b(c_- f(0^- ) + c_+ f(0^+)) \}
\end{aligned}
\end{equation}
where 
\begin{equation*}
\hat b:=b\int_\RE v(s)\big(f_r(s)\big)^2 ds
\end{equation*}
 and 
\begin{equation}
\label{hrb}
\widetilde h_r f := - \frac{d^2 f}{ds^2} \qquad s\neq 0\,,
\end{equation}
 individuates a family of self-adjoint perturbations of the Laplacian in dimension one (see, e.g., \cite{ABD95}).
 
The following proposition generalizes the result stated in proposition \ref{prop1}, the proof can be read in \cite{CE07}, theorem 3.1. (see also \cite{ACF07}).
\begin{proposition}
\label{prop2}
Take $\widetilde h^\ve$ and $h$  defined as above and assume that $v$ satisfies conditions \eqref{vass}. Then two cases can occur:
\begin{enumerate}
\item There does not exist a zero energy resonance for the Hamiltonian $h$, then
\begin{equation*}
\ulim_{\ve\to0} ( \widetilde h^\ve - z )^{-1} = ( h_0 -z )^{-1} \qquad
 z \in \CO \setminus \erre\,.
\end{equation*}
\item There exists a zero energy resonance for the Hamiltonian $h$, then
\begin{equation*}
\ulim_{\ve\to0} (\widetilde h^\ve - z )^{-1} = (\widetilde h_r -z )^{-1} \qquad
z \in \CO \setminus \erre\,.
\end{equation*}
\end{enumerate}
\end{proposition}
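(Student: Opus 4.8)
The plan is to transfer the whole problem to the low-energy resolvent of the fixed-support operator $\widetilde H^\ve:=-\frac{d^2}{ds^2}+(1+\ve b)v=h+\ve b\,v$ and then to run the Birman--Schwinger/Krein analysis behind Proposition \ref{prop1}, the only new ingredient being the coupling $1+\ve b$. First I would strip the internal scaling off the potential by the unitary dilation $(D_\ve f)(s):=\ve^{-1/2}f(s/\ve)$. A one-line computation gives $D_\ve^{-1}\widetilde h^\ve D_\ve=\ve^{-2}\widetilde H^\ve$, hence
\begin{equation*}
(\widetilde h^\ve-z)^{-1}=\ve^2\,D_\ve\,(\widetilde H^\ve-\ve^2 z)^{-1}\,D_\ve^{-1}\,,\qquad z\in\CO\setminus\erre\,.
\end{equation*}
Thus everything is governed by the resolvent of $\widetilde H^\ve$ at the vanishing energy $\zeta:=\ve^2 z$; the dilation $D_\ve$ shrinks the compact support of $v$ to the point $s=0$ and is what manufactures a point interaction at the origin in the limit.

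Next I would use a Krein representation for $(\widetilde H^\ve-\zeta)^{-1}$. Writing $v=v_1 v_2$ with $v_1:=|v|^{1/2}$, $v_2:=\sgn(v)|v|^{1/2}$ and letting $G_\zeta$ be the free resolvent, with kernel $\frac{i}{2\sqrt\zeta}\,e^{i\sqrt\zeta|s-s'|}$, one has
\begin{equation*}
(\widetilde H^\ve-\zeta)^{-1}=G_\zeta-(1+\ve b)\,G_\zeta v_1\,M^\ve(\zeta)^{-1}\,v_2 G_\zeta\,,\qquad M^\ve(\zeta):=I+(1+\ve b)\,v_2 G_\zeta v_1\,.
\end{equation*}
Assumption \eqref{vass} lets me expand $G_\zeta=\frac{i}{2\sqrt\zeta}\,\Theta+G^{(0)}+\OO(\sqrt\zeta)$, where $\Theta$ has kernel $\equiv1$ and $G^{(0)}$ has kernel $-\frac12|s-s'|$ (both meaningful once flanked by the compactly supported weights $v_1,v_2$). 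The zero-energy resonance of $h$ is exactly the statement that the regular part $I+v_2 G^{(0)}v_1$ fails to be invertible, its kernel being one-dimensional and spanned by a vector built from $f_r$; in its absence $M^\ve(\zeta)^{-1}$ stays uniformly bounded as $\ve\to0$.

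From here the two cases split as in Proposition \ref{prop1}. In the non-resonant case a small $\ve b$ cannot create a resonance, so $M^\ve(\zeta)^{-1}$ is uniformly bounded, the singular term $1/\sqrt\zeta=1/(\ve\sqrt z\,)$ dominates and forces the Dirichlet condition $f(0)=0$; conjugating back by $D_\ve$ gives $\ulim_{\ve\to0}(\widetilde h^\ve-z)^{-1}=(h_0-z)^{-1}$, independently of $b$. In the resonant case I would perform a Grushin/Feshbach reduction of $M^\ve(\zeta)$ onto the one-dimensional resonance eigenspace. The reduced scalar block is governed by two competing small quantities, the coupling shift $\ve b$ and the energy scale $\sqrt\zeta=\ve\sqrt z$, \emph{both of order} $\ve$; their finite ratio survives and, after conjugation by $D_\ve$, yields the resolvent of $\widetilde h_r$, the $\ve b$-shift contributing precisely the right-hand side of the second condition in \eqref{domhrb}. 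The constant is pinned down by first-order perturbation theory of the threshold resonance applied to $\widetilde H^\ve=h+\ve b\,v$: the relevant scalar is $b\int_\RE v\,(f_r)^2\,ds=\hat b$, and the normalization $c_-^2+c_+^2=1$ from \eqref{c1c2} is exactly what makes the asymptotic values $c_\pm$ appear in the gluing relation.

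The main obstacle is the resonant case: one must make the Grushin reduction quantitative and show that the singular $1/\sqrt\zeta$ contribution and the $\ve b$ contribution combine, together with the prefactor $\ve^2$ and the conjugation by $D_\ve$, into exactly the finite operator $(\widetilde h_r-z)^{-1}$ — in particular that the cross terms and the $\OO(\sqrt\zeta)$ remainder in the expansion of $G_\zeta$ are negligible and that the projection of $b\,v$ onto the resonance eigenspace produces the stated $\hat b$. This is the content of \cite{CE07}, Theorem 3.1, whose proof refines the resolvent analysis of \cite{ACF07}.
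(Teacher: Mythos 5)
Your overall strategy --- dilation to the fixed-coupling operator $\widetilde H^\ve=h+\ve b\,v$, Krein/Birman--Schwinger factorization, threshold expansion of $G_\zeta$, a Grushin reduction in the resonant case, and identification of $\hat b=b\int_\erre v\,f_r^2\,ds$ by first-order perturbation of the resonance --- is indeed the machinery behind this statement, and your closing deferral to \cite{CE07}, Theorem 3.1 coincides with what the paper itself does: the paper gives no independent proof of Proposition \ref{prop2}, it simply cites that theorem (see also \cite{ACF07}). So there is no divergence of approach; the question is whether the mechanism you sketch is sound.

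On one crucial point it is not: the characterization of the zero-energy resonance. In dimension one, under \eqref{vass}, the existence of a resonance is \emph{not} equivalent to non-invertibility of $I+Q_0$, where $Q_0:=v_2G^{(0)}v_1$. Because $\int_\erre v=\langle v_1,v_2\rangle\neq0$, the singular rank-one term $\frac{i}{2\sqrt\zeta}\,|v_2\rangle\langle v_1|$ cannot be analyzed separately from the regular part: a bounded solution $f_r$ of $hf_r=0$ is equivalent to the existence of $\phi=v_2f_r\neq0$ satisfying
\begin{equation*}
(I+Q_0)\phi=\tfrac{c_++c_-}{2}\,v_2\,,\qquad\quad \langle v_1,\phi\rangle=\int_\erre v\,f_r\,ds=0\,,
\end{equation*}
so that, when $I+Q_0$ is invertible, the resonance condition reads $\langle v_1,(I+Q_0)^{-1}v_2\rangle=0$, not $\ker(I+Q_0)\neq\{0\}$. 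The two conditions are genuinely different. For $v=-\pi^2\chi_{[-1,1]}$ the function equal to $\cos(\pi s)$ on $[-1,1]$ and to $-1$ outside is a resonance with $c_+=c_-$, yet $I+Q_0$ is invertible; conversely, a nonzero $\phi\in\ker(I+Q_0)$ with $\langle v_1,\phi\rangle\neq0$ corresponds to a zero-energy solution that grows linearly at both ends, hence to no resonance at all. With your criterion the case split of the proposition would therefore be performed on the wrong class of potentials: in the first situation you would assert uniform boundedness of $M^\ve(\zeta)^{-1}$ and conclude decoupling, whereas the true limit is $(\widetilde h_r-z)^{-1}$ (in fact $M^\ve(\zeta)^{-1}$ blows up like $1/\ve$ there, and this blow-up is precisely what produces the nontrivial coupling); in the second you would run the resonant Grushin reduction although the true limit is $(h_0-z)^{-1}$. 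This is exactly the threshold subtlety for which the paper invokes \cite{BGW85} and \cite{JN01}, and it is where the hypothesis $\int_\erre v\neq0$ --- which your sketch states but never actually uses --- enters: it makes the singular rank-one term non-degenerate and rules out a zero-energy eigenvalue, so that only the two cases of the proposition can occur. The repair is to set up the Feshbach/Grushin reduction with respect to the pair of subspaces $\{\phi:\langle v_1,\phi\rangle=0\}$ and $\mathrm{span}(v_2)$, rather than with respect to $\ker(I+Q_0)$; with that correction the remainder of your outline is consistent with the proof in \cite{CE07}.
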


Let us denote by $\widetilde \Omega^\ve$ the family of domains obtained from $\Omega$ by scaling $\gamma$ and $d$ as stated in \eqref{scalinggamma2} and \eqref{scalingd}. Following what was done in section \ref{sec1} one can define the Robin Laplacian on the family of domains $\widetilde \Omega^\ve$, the operator $-\Delta_{\widetilde \Omega^\ve}^R$ is unitarily equivalent to the operator $\widetilde H^\ve$ obtained via the substitution  $\gamma\to\widetilde\gamma^\ve=\sqrt{1+2\ve b} \,\gamma$ in \eqref{Hve}.
\begin{equation*}
\widetilde H^{\ve}:=-
\pd{}{s} \frac{1}{(1+u\widetilde \eta^\ve(s))^2}\pd{}{s}-
 \frac{1}{\de^{\ve\, 2}} \pd{^2}{u^2} +\frac{1}{\ve^2} \widetilde V^\ve(s,u)\,,
\end{equation*}
\n with
\begin{equation*}
\widetilde V^\ve(s,u):=-\frac{\widetilde \gamma^\ve(s/\ve)^2}{4(1+u\widetilde \eta^\ve(s))^2}
+\frac{\de^\ve/\ve \, u\widetilde {\gamma^\ve}''(s/\ve)}{2(1+u\widetilde \eta^\ve(s))^3}
-\frac{5}{4}\frac{(\de^\ve/\ve)^2 u^2\widetilde {\gamma^\ve}'(s/\ve)^2}{(1+u \widetilde\eta^\ve(s))^4}
\end{equation*}
and with domain
\begin{equation*}
\DD(\widetilde H^\ve):=\bigg\{\psi\in H^2(\Omega')\,s.t.\,
 \pd{\psi}{u} (s , d ) + \widetilde \al^{\ve}_1 (s) \psi (s ,d ) =0\;,
-\pd{\psi}{u} (s ,- d ) + \widetilde \al^{\ve}_2 (s) \psi (s , -d ) =0
\bigg\}\,,
\end{equation*}
here  $\widetilde\al_1^\ve (s)$ and $\widetilde\al_2^\ve (s)$ are given by
\begin{equation}
\widetilde\al_1^\ve (s):=\al-\frac{\widetilde\eta^\ve (s)}{2(1+d \widetilde\eta^\ve(s))}\;,\quad
\widetilde\al_2^\ve (s):=\al+\frac{\widetilde\eta^\ve (s)}{2(1- d \widetilde\eta^\ve(s))}
\end{equation}
and we have introduced $\widetilde\eta^\ve(s): = \de^\ve / \ve \, \widetilde\ga^\ve(s/\ve)$. 

It is easy to understand  that a slightly different version of lemma \ref{lemma1} and lemma \ref{lemma2} holds  for the Hamiltonian $\widetilde H^\ve$.

With the scaling \eqref{scalinggamma2}  the  main contribution to the longitudinal part of the Hamiltonian, once the dynamics has been reduced to the $n$-th transverse mode, is given by
\begin{equation*}
\widetilde h^\ve_n:=-\frac{d^2}{ds^2}+\bt_n\frac{1+\ve b}{\ve^2}\ga^2(s/\ve)\,,
\end{equation*}
that via  proposition \ref{prop2} leads to a slightly different version of theorem \ref{mainth}. 

For all $z\in{\CO\backslash\RE}$ and for all $m,n=0,1,2,\dots$, let us denote by  $\widetilde r_{m,n}^\ve(z)$  the reduced resolvent of $\widetilde H^\ve$, i.e., the operator in $\BB\big(L^2(\RE),L^2(\RE)\big)$ defined via its integral kernel by
\begin{equation*}
\widetilde r_{m,n}^\ve(z;s,s'):=
\int_{-d}^{d}\int_{-d}^{d}
\widetilde\phi_m^\ve(s,u)\,\Big(\widetilde H^\ve-\frac{\mu_n}{\dde}-z\Big)^{-1}(s,u;s',u')\,\widetilde\phi_n^\ve(s',u')
du\,du'\,.
\end{equation*}

\begin{theorem}
Assume that $\Ga$ has no self-intersections and that $\ga\in C_0^\infty(\RE)$, moreover take $a>3 $ then for all $n,m=0,1,2,\dots$ two cases can occur:
\begin{enumerate}
\item For all $n$ such that  there does not exist a zero energy resonance for  $h_n$ we have
\begin{equation*}
\ulim_{\ve\to0}\widetilde r_{m,n}^\ve(z)= \delta_{m,n}( h_0 -z )^{-1} \qquad
 z \in \CO \setminus \erre\,,\;m,n=0,1,2,\dots\,.
\end{equation*}
\item For all $n$ such that there exists  a zero energy resonance, $f_{r,n}$, for  $h_n$ we have
\begin{equation*}
\ulim_{\ve\to0}\widetilde r_{m,n}^\ve(z)=\delta_{m,n} (\widetilde h_{r,n} -z )^{-1} \qquad
z \in \CO \setminus \erre\,,\;m,n=0,1,2,\dots\,.
\end{equation*}
where $\widetilde h_{r,n}$ is  defined according to  \eqref{domhrb} and  \eqref{hrb}.
\end{enumerate}
\end{theorem}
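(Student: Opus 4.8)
The plan is to follow verbatim the two-step architecture used for Theorem \ref{mainth}, replacing $H^\ve$ by $\widetilde H^\ve$ throughout and invoking Proposition \ref{prop2} in place of Proposition \ref{prop1} at the very last step. First I would carry out the adiabatic reduction to the $n$-th transverse mode, producing analogues of Lemma \ref{lemma1} and Lemma \ref{lemma2} whose combined effect is
\begin{equation*}
\ulim_{\ve\to0}\big(\widetilde r_{m,n}^\ve(z)-\de_{m,n}(\widetilde h_n^\ve-z)^{-1}\big)=0\,,\qquad z\in\CO\setminus\erre\,,
\end{equation*}
and then I would feed the effective longitudinal operator $\widetilde h_n^\ve$ into Proposition \ref{prop2}.

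For the reduction I would first introduce the approximate transverse--decoupled Hamiltonian $\widehat{\widetilde H}^\ve$, defined exactly as $\Hd^\ve$ in \eqref{hami} but with $\ga\to\widetilde\ga^\ve=\sqrt{1+2\ve b}\,\ga$, and prove the analogue of Lemma \ref{lemma1}, namely that its reduced resolvent approximates $\widetilde r_{m,n}^\ve(z)$ up to $\OO(\ve^{a-3})$. The key observation is that $\widetilde H^\ve$ is obtained from $H^\ve$ only through the substitution $\ga\to\sqrt{1+2\ve b}\,\ga$, and since $\sqrt{1+2\ve b}\to1$ is uniformly bounded for small $\ve$, the remainder symbols $b_1^\ve,b_2^\ve,W^\ve$ still satisfy estimates of the form \eqref{estb1b2W} (with modified constants), the resolvent formula \eqref{risolvente} and the uniform bounds \eqref{stimacalG}, \eqref{stimaGammave}, \eqref{ultima} retain their structure, and so the entire chain of estimates leading to \eqref{goal} goes through with the same power $\ve^{a-3}$. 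For the analogue of Lemma \ref{lemma2} I would build the perturbed transverse parameters from $\widetilde\eta^\ve=(\de^\ve/\ve)\sqrt{1+2\ve b}\,\ga(s/\ve)$; since $\|\widetilde\eta^\ve\|_{L^\infty}\le c\ve^{a-1}$ is still small, the second-order expansion \eqref{lan2} applies verbatim and yields
\begin{equation*}
\big\|\la_n^\ve(\cdot)-\mu_n-\la_n^{(2)}\widetilde\eta^\ve(\cdot)^2\big\|_{L^\infty(\RE)}\le c\,\ve^{3(a-1)}\,,
\end{equation*}
so that, after dividing by $\dde$, the difference between the compressed operator and $\widetilde h_n^\ve$ is $\OO(\ve^{a-3})$ in resolvent sense. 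The only change from Lemma \ref{lemma2} is that the target operator now carries the factor coming from the scaling inside the potential, i.e. $\widetilde h_n^\ve=-d^2/ds^2+\bt_n\,(1+2\ve b)\,\ga^2(s/\ve)/\ve^2$.

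It remains to take the limit of $\widetilde h_n^\ve$. Writing $v:=\bt_n\ga^2$, the operator $\widetilde h_n^\ve$ has exactly the form $-d^2/ds^2+\frac{1+\ve b'}{\ve^2}v(s/\ve)$ covered by Proposition \ref{prop2} (with $b'$ the constant produced by the scaling). Because the $\OO(\ve)$ factor multiplying $v$ tends to $1$, the existence of a zero-energy resonance is decided by the unperturbed operator $h_n=-d^2/ds^2+\bt_n\ga^2$, precisely as in Theorem \ref{mainth}; hence the generic/non-generic alternative is identical. In the generic case Proposition \ref{prop2} gives $\ulim_{\ve\to0}(\widetilde h_n^\ve-z)^{-1}=(h_0-z)^{-1}$, while in the resonant case it gives $(\widetilde h_{r,n}-z)^{-1}$, with $\widetilde h_{r,n}$ the operator \eqref{domhrb}--\eqref{hrb} whose constants $c_{\pm,n}$ come from $f_{r,n}$ via \eqref{c1c2} and whose coupling constant $\hat b_n$ is the one read off from Proposition \ref{prop2}, proportional to $b\int_\RE\bt_n\ga^2\,(f_{r,n})^2\,ds$. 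Composing this with the reduction step and the diagonal factor $\de_{m,n}$ yields both assertions of the theorem.

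The routine part is the transcription of Lemmas \ref{lemma1}--\ref{lemma2}; the one point that genuinely deserves care---and which I regard as the main obstacle---is to make sure the $\OO(\ve)$ deformation of the potential amplitude is \emph{retained} in $\widetilde h_n^\ve$ rather than absorbed into the error. In the resolvent estimates this term is formally small, yet after the singular $1/\ve^2$ rescaling and the resonant blow-up encoded in Proposition \ref{prop2} it is exactly this $\OO(\ve)$ correction that survives into an $\OO(1)$ modification of the matching condition at the vertex, producing the nontrivial coupling $\hat b_n$. Concretely, I must check that the bound on $\la_n^\ve-\mu_n-\la_n^{(2)}\widetilde\eta^\ve{}^2$ is obtained \emph{without} discarding the linear-in-$b$ piece of $\la_n^{(2)}\widetilde\eta^\ve{}^2$, so that the limiting boundary condition is \eqref{domhrb} and not merely the scale-invariant \eqref{scaleinv}; this is the only place where the deformation manifests itself and where Proposition \ref{prop2}, rather than Proposition \ref{prop1}, is indispensable.
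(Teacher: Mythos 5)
Your proposal is correct and takes essentially the same route as the paper, whose own argument is exactly this sketch: transcribe Lemma \ref{lemma1} and Lemma \ref{lemma2} for $\widetilde H^\ve$ and then invoke Proposition \ref{prop2} in place of Proposition \ref{prop1}. You are in fact slightly more careful than the paper on one point: squaring $\widetilde\ga^\ve=\sqrt{1+2\ve b}\,\ga$ gives the factor $(1+2\ve b)$ you wrote, so Proposition \ref{prop2} is applied with deformation parameter $2b$ (hence $\hat b_n=2b\,\bt_n\int_\RE\ga^2 f_{r,n}^2\,ds$), whereas the paper's displayed $\widetilde h^\ve_n$ carries $(1+\ve b)$.
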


\section{Conclusions and remarks}
\setcounter{equation}{0}

We have studied the convergence of the Robin Laplacian on a waveguide to an operator on a graph made up of two edges and one vertex. In this setting we were able to give detailed results on the convergence in norm resolvent sense. Our analysis takes into account the projection on each transverse mode and distinguishes generic and non-generic cases leading respectively to decoupling and non-decoupling gluing conditions in the vertex. Non-generic cases are related to the existence of zero energy resonances for the leading term of the effective Hamiltonian. The relation between gluing conditions in the vertex and the resonance is rigorously stated in equations  \eqref{c1c2} - \eqref{hr}. 

The existence of a zero energy  resonance is an exceptional event and in general it is destroyed by slight deformations of the potential. As it is clearly shown in figure \ref{fig2}, apart for  few special cases, the  coefficients $\bt_{n}$ change as $n$ changes. For these reasons even in the simpler example of a waveguide in most of the cases the operator on the corresponding graph is defined by conditions in the vertex of decoupling type and in general,  for fixed $\al$, one can expect coupling at most in one transverse mode. This is in agreement with previous results derived in \cite{ACF07}, \cite{CE07}, \cite{MV07pp} and \cite{Gri07pp}.

It is interesting to note that in this simple model all the interplay between geometry and boundary
conditions on the initial domain, is reduced to the value of $\beta_n$. In particular a positive sign of
$\beta_n$, may rule out the possibility of having a zero energy resonance giving
decoupling conditions in the limit. We conjecture that much more complicate 
geometries than the vertex region of a strip with constant
width may give different threshold singularities opening up for the possibility of much more general non decoupling
conditions.

We discuss with more detail our result for some suitable choices of the parameter $\al$.

Formally the limit $\al\to\pm\infty$ gives  Dirichlet conditions on the boundary of the waveguide. This case was discussed in two former works \cite{ACF07} and \cite{CE07}, the results stated there are included in our paper. This can be seen by studying the asymptotic behavior of equation \eqref{btn} and in particular by noticing that $\bt_n(\pm\infty)=-1/4$, see also figure \ref{fig2}. The anomalous behavior of $\bt_0(\al)$ and $\bt_1(\al)$ 
for $\al \to -\infty$  is related to the existence of  two negative eigenvalues for $\hhd_\al$ when $\al<-1$, see the plot of $\mu_0(\al)$ and $\mu_1(\al)$ in figure \ref{fig1}. It is important to notice that Dirichlet boundary conditions on $\partial\Omega$ are preserved by the unitary map \eqref{unitarymap}. As a consequence the analysis of the transverse part of the Hamiltonian is particularly simple in this case: transverse modes and eigenvalues do not depend on $s$ and the Hamiltonian  on the graph does not depend on $n$.

The case $\al=0$ reproduces Neumann boundary conditions. As expected even in our model the Neumann waveguide, on the ground state, is approximated by free conditions in the vertex and  this behavior is summarized by $\bt_0(0)=0$. The  reduced Hamiltonian relative to the lowest transverse mode in the Neumann case is the free Laplacian  on the line and  the  zero energy resonance is the constant function. In this case the constants $c_-$ and $c_+$ coincide and the limit Hamiltonian is defined by free conditions in the vertex.

Since in the Neumann case $\bt_n(0)=3/4$ for $n=1,2,\dots$, the Hamiltonian on the graph obtained by projecting onto the excited transverse modes has decoupling conditions in the vertex.

We want to stress an important  difference between our approach and the one generically used in other works. In our setting  the domain $\Omega^\ve$  is defined via  two characteristic lengths: the width of the waveguide $\de^\ve$ and the range $\ve$ on which the curvature varies. Since we assume that for small $\ve$ we have $\de^\ve\ll\ve$, we can make use of the adiabatic separation of the dynamics that leads in a natural way toward  an analysis of the problem in two steps. The existence of two different scales of length makes it difficult to compare our model with the one generically used in the works cited in  the introduction in which the scaling of the network in a neighborhood of the vertex is assumed to be isotropic.

The analysis of the case $\de^\ve=\ve$ would be of great interest because it reproduces the scaling of  \cite{Gri07pp} and \cite{MV07pp}. The failing of the adiabatic approach makes this case much more complicated and we guess that in this setting the uniform resolvent convergence could be too demanding.

Even in more complicated settings, such as graphs with three or more edges, it  seems reasonable to expect that deformations of order $\ve$ of networks associated to non decoupling conditions in the vertex could lead to more general gluing  conditions. This idea is suggested from the result stated in section \ref{sec5} and was already envisaged in \cite{CE07}.\\

\vspace{0.3cm}

{\bf Acknowledgments} The authors are grateful to Sergio Albeverio,
Pavel Exner and David Krej\v{c}i\v{r}\'ik for the  enlightening hints and useful comments. This work was supported by the Doppler Institute grant (LC06002).\\

\appendix
\section{Proof of formula \eqref{risolvente} \label{appendice}}
\setcounter{equation}{0}

We mark that formula \eqref{risolvente} can be proved by making use of a very general technique developed by A. Posilicano in \cite{Pos01} and \cite{Pos07pp}. For convenience of the reader we give a more direct proof of the formula in this appendix together with the proof of several technical estimates used in lemma \ref{lemma1}.  

Let us start with the proof of formula \eqref{risolvente}. We denote by $\hat{Q}_0^\ve$ the quadratic form associated to the operator $\hat H^\ve_0$ defined in \eqref{domhatH0ve} - \eqref{hatH0ve}, it  is given by
\begin{equation*}
\begin{aligned}
\hat{Q}_0^\ve [\varphi,\psi]=&
\int_{\Omega'}\bigg( \overline{\pd{\varphi}{s}}\pd{\psi}{s}+
\frac{1}{\dde}\overline{\pd{\varphi}{u}}\pd{\psi}{u}-
\frac{1}{\ve^2}\frac{\ga^2}{4}\,\overline{\varphi}\,\psi\bigg) ds\,du \\
&+\al\int_\RE(\overline{\varphi}(s,d)\psi(s,d)+\overline{\varphi}(s,-d)\psi(s,-d))ds\,.
\end{aligned} 
\end{equation*}
We recall also that the quadratic form $\hat{Q}^\ve$ associated to $\hat H^\ve$ defined in \eqref{domhami} - \eqref{hami} is
\begin{equation}
\label{formetta2}
\hat{Q}^\ve [\varphi,\psi]=\hat{Q}_0^\ve [\varphi,\psi]
+\int_\RE
\lf[
(\al_1^\ve(s)-\al)\overline{\varphi}(s,d)\psi(s,d)+(\al_2^\ve(s)-\al)\overline{\varphi}(s,-d)\psi(s,-d)\ri]ds\,.
\end{equation}
Let us prove that for all $\psi \in L^2$
\begin{equation*}
\hat R^\ve(z) \psi = \hat R^\ve_0(z) \psi + {\mathcal G}^\ve(z)\underline{q}^\ve
\end{equation*}
where ${\mathcal G}^\ve(z)$ was defined in  \eqref{mathcalGve} and  $\underline{q}^\ve$ is a solution of equation
\begin{equation}
\label{charge}
\underline{q}^\ve + (\underline{\al} - \underline{\al}^\ve ) 
\lf( \Gamma^\ve(z) \underline{q}^\ve + {\mathcal G}^{\ve\,\ast}(z) \psi \ri) = 0\,.
\end{equation}
For $\psi \in L^2$, we define 
\begin{equation*}
T^\ve(z) \psi = \hat R^\ve_0(z) \psi + {\mathcal G}^\ve(z)\underline{q}^\ve
\end{equation*}
and we look for conditions on $\underline{q}^\ve$ such that 
\begin{equation}
\label{formina}
\hat{Q}^\ve [\varphi, T^\ve(z)\psi] -z ( \varphi , T^\ve(z)\psi ) = ( \varphi , \psi )
\end{equation}
for any $\varphi \in C^{\infty}_0 (\erre^2 ) $. If \eqref{formina} holds then $T^\ve(z) = \hat R^\ve(z)$. If we start from \eqref{formetta2}, integrate by parts and notice that $\hat R^\ve_0(z) \psi
\in \DD ( \hat H_0^\ve ) $, we have
\begin{equation*}
\hat{Q}^\ve [\varphi, T^\ve(z)\psi] -z ( \varphi , T^\ve(z)\psi ) = ( \varphi , \psi )+
\lf( \underline{\varphi}  , 
\lf[
\underline{q}^\ve + (\underline{\al} - \underline{\al}^\ve ) 
\lf( \Gamma^\ve(z) \underline{q}^\ve + {\mathcal G}^{\ve\,\ast}(z) \psi \ri) 
\ri]
\ri)_{L^2(\partial\Omega')}
\end{equation*}
with $\underline{\varphi}(s) =(\varphi(s,d),\varphi(s,-d))$, which implies \eqref{charge}. Now we have to prove that equation 
\eqref{charge} has a unique solution for all $\psi\in L^2$. To this aim we need to prove that $\Gamma^\ve(z)$ and ${\mathcal G}^{\ve}(z)$ are bounded. Let us prove estimates \eqref{stimacalG} and \eqref{stimaGammave} that are required in the proof of lemma \ref{lemma1}, then the boundedness of $\Gamma^\ve(z)$ and ${\mathcal G}^{\ve}(z)$ will be an obvious consequence.

To prove \eqref{stimacalG}  we need to prove that that $G_i^\ve(z+\mu_n/\de^{\ve\,2})$ are operators uniformly bounded in $\ve$ for all $n=0,1,2,\dots$; we shall prove the boundedness only of $G_1^\ve(z+\mu_n/\de^{\ve\,2})$, the proof for 
$G_2\lf(z+\frac{\mu_n}{\de^{\ve\,2}}\ri)$ is similar and will be omitted.

We denote by $\mathcal P^\ve(d\la)$ the projector valued measure associated to the operator $\displaystyle-\frac{d^2}{ds^2}-\frac{1}{\ve^2}\frac{\gamma^2(s/\ve)}{4}$ and by $\sigma^\ve$ its spectrum. Given a function $q\in L^2(\RE)$,  by the spectral theorem we have
\begin{equation*}
\Big\|G^\ve_{1}\lf(z+\frac{\mu_n}{\de^{\ve\,2}}\ri)q\Big\|_{L^2}^2 =
\int_{\sigma^\ve}
\sum_{j=0}^\infty 
\frac{\xi_j^2(d)}{|\la+(\mu_j-\mu_n)/\de^{\ve\,2}-z|^2}
(q,\mathcal P^\ve(d\la)q)_{L^2(\RE)}\,.
\end{equation*}
It is easy to see that for $\Im z \neq 0$ 
\begin{equation*}
\sum_{j=0}^\infty 
\frac{\xi_j^2(d)}{|\la+(\mu_j-\mu_n)/\de^{\ve\,2}-z|^2}\leqslant c\,,
\end{equation*}
this is proved by decomposing the series into two parts
\begin{equation*}
 \sum_{j=0}^\infty\frac{\xi_j^2(d)}{|\la+(\mu_j-\mu_n)/\de^{\ve\,2}-z|^2}
\leqslant
\de^{\ve\,4}\sum_{j\leqslant n}\frac{\xi_j^2(d)}{|\mu_j-\mu_n+\de^{\ve\,2}(\la-z)|^2}+
\de^{\ve\,4}\sum_{j>n}\frac{\xi_j^2(d)}{|\mu_j-\mu_n+\de^{\ve\,2}(\la-z)|^2}\,.
\end{equation*}
The first term in the right hand side can be easily estimated by the imaginary part of $z$
\begin{equation*}
\de^{\ve\,4}\sum_{j\leqslant n}
\frac{\xi_j^2(d)}{|\mu_j-\mu_n+\de^{\ve\,2}(\la-z)|^2}\leqslant\frac{c}{|\Im z|}\,.
\end{equation*}
To estimate the last term we notice that, since $\la\in\sigma^\ve\subseteq[-c/\ve^2,\infty)$ there exists $\ve_0$ such that for all $0<\ve<\ve_0$  and for all $j>n$
\begin{equation*}
|\mu_j-\mu_n+\de^{\ve\,2}(\la-z)|\geqslant
|\mu_j-\mu_n-c(\de^{\ve}/\ve)^2|\,
\end{equation*}
from which it follows that for all $0<\ve<\ve_0$ and uniformly  in $\la\in\sigma^\ve$
\begin{equation*}
\sum_{j>n}\frac{\xi_j^2(d)}{|\mu_j-\mu_n+\de^{\ve\,2}(\la-z)|^2}\leqslant c\,.
\end{equation*}
Therefore from the definition of spectral projection we get 
\begin{equation*}
\Big\|G^\ve_{1}\lf(z+\frac{\mu_n}{\de^{\ve\,2}}\ri)q\Big\|_{L^2} \leqslant
c \|q\|_{L^2(\RE)}\,,
\end{equation*}
and then $G_1\lf(z+\frac{\mu_n}{\de^{\ve\,2}}\ri)$ is bounded uniformly in $\ve$. The boundedness of ${\mathcal G}^{\ve}(z)$ comes from a similar argument and the proof will be omitted. The same estimates obviously hold for ${\mathcal G}^{\ve\,\ast}(z)$.

By noticing that for all $q\in L^2(\RE)$
\begin{equation*}
\bigg\|G^\ve_{1,1}\Big(z+\frac{\mu_n}{\de^{\ve\,2}}\Big)q\bigg\|_{L^2(\RE)}^2 \leqslant 
\int_{\sigma^\ve}\bigg|
\sum_{j=0}^\infty\frac{\xi_j^2(d)}{\la+(\mu_j-\mu_n)/\de^{\ve\,2}-z}\bigg|^2\,
(q,\mathcal P^\ve(\la)q)_{L^2(\RE)}\,d\la\,.
\end{equation*}
and by   an argument similar to the one used before one can show that for all $\Im z\neq0$ there exists an $\ve_0$ such that for all $0<\ve<\ve_0$
\begin{equation*}
\bigg|
\sum_{j=0}^\infty\frac{\xi_j^2(d)}{\la+(\mu_j-\mu_n)/\de^{\ve\,2}-z}\bigg|^2\leqslant c\,.
\end{equation*}
From which it follows that $G_{1,1}^\ve(z+\mu_n/\de^{\ve\,2})$ is uniformly bounded in $\ve$. The same holds for all $G_{i,j}^\ve(z+\mu_n/\de^{\ve\,2})$ and, consequently, for $\Gamma^\ve(z+\mu_n/\de^{\ve\,2})$. The boundedness of $\Gamma^\ve(z)$ easily comes. 
 
Now we can go back to formula \eqref{charge} and prove that there exists a unique solution $\underline q^\ve\in L^2(\partial\Omega')$. We have just proved that there exists $\ve_0$ such that, for $0< \ve < \ve_0$, 
$ \| (\underline{\al} - \underline{\al}^\ve ) \Gamma^\ve(z) \|_{L^2(\partial\Omega')} <1$, then
$(I + (\underline{\al} - \underline{\al}^\ve ) \Gamma^\ve(z) )^{-1}$ is well defined by its Neumann series
and
\begin{equation*}
\underline{q}^\ve = - \big( I + (\underline{\al} - \underline{\al}^\ve ) \Gamma^\ve(z) \big)^{-1} 
(\underline{\al} - \underline{\al}^\ve )  {\mathcal G}^{\ve\,\ast}(z) \psi 
\end{equation*}
is the solution of \eqref{charge}. 

This concludes the proof of formula \ref{risolvente}. From the inequality $\| (\underline{\al} - \underline{\al}^\ve )\|_{L^\infty(\partial\Omega')}\leqslant\de^\ve/\ve$ we also get $\|\underline q^\ve\|_{L^2(\partial\Omega)}\leqslant c \de^\ve/\ve\|\psi\|_{L^2}$ and  we proved all the estimates that we used in lemma \ref{lemma1}.


\begin{thebibliography}{10}

\bibitem{Ada75}
R.~R. Adams, \emph{Sobolev spaces}, Academic Press, New York, 1975.

\bibitem{ABD95}
S.~Albeverio, Z.~Brze{\'{z}}niak, and L.~D{\c{a}}browski, \emph{Fundamental
  solutions of the heat and {S}chr{\"o}dinger equations with point
  interactions}, J. Funct. Anal. \textbf{130} (1995), 220--254.

\bibitem{ACF07}
S.~Albeverio, C.~Cacciapuoti, and D.~Finco, \emph{Coupling in the singular
  limit of thin quantum waveguides}, J. Math. Phys. \textbf{48} (2007), 032103.

\bibitem{BCFK06}
G.~Berkolaiko, R.~Carlson, S.~Fulling, and P.~Kuchment, \emph{Quantum graphs
  and their applications}, Contemporary Math., vol. 415, American Math.
  Society, Providence, R.I., 2006.

\bibitem{BGW85}
D.~Boll{\'e}, F.~Gesztesy, and S.~F.~J. Wilk, \emph{A complete treatment of
  low-energy scattering in one dimension}, J. Operator Theory \textbf{13}
  (1985), no.~1, 3--32.

\bibitem{BMT07}
G.~Bouchitt\'{e}, M.~L. Mascarenhas, and L.~Trabucho, \emph{On the curvature
  and torsion effects in one dimensional waveguides}, ESAIM Control Optim.
  Calc. Var. \textbf{13} (2007), no.~4, 793--808.

\bibitem{CE07}
C.~Cacciapuoti and P.~Exner, \emph{Nontrivial edge coupling from a {D}irichlet
  network squeezing: the case of a bent waveguide}, J. Phys. A: Math. Theor.
  \textbf{40} (2007), no.~26, F511--F523.

\bibitem{DT06}
G.~Dell'Antonio and L.~Tenuta, \emph{Quantum graphs as holonomic constraints},
  J. Math. Phys. \textbf{47} (2006), 072102.

\bibitem{DK02}
J.~Dittrich and J.~K\v{r}\'{\i}\v{z}, \emph{Curved planar quantum wires with
  {D}irichlet and {N}eumann boundary conditions}, J. Phys. A \textbf{35}
  (2002), no.~20, L269--L275.

\bibitem{DE95}
P.~Duclos and P.~Exner, \emph{Curvature-induced bound states in quantum
  waveguides in two and three dimensions}, Rev. Math. Phys. \textbf{7} (1995),
  no.~1, 73--102.

\bibitem{EP05}
P.~Exner and O.~Post, \emph{Convergence of spectra of graph-like thin
  manifolds}, J. Geom. Phys. \textbf{54} (2005), 77--115.

\bibitem{ES89}
P.~Exner and P.~{\v{S}}eba, \emph{Free quantum motion on a branching graphs},
  Rep. Math. Phys. \textbf{28} (1989), no.~1, 7--26.

\bibitem{FW93}
M.~I. Freidlin and A.~D. Wentzel, \emph{Diffusion processes on graphs and
  averaging principle}, Ann. Probab. \textbf{21} (1993), no.~4, 2215--2245.

\bibitem{FK08}
P.~Freitas and D.~Krej\v{c}i\v{r}\'{\i}k, \emph{Location of the nodal set for
  thin curved tubes}, Indiana Univ. Math. J. \textbf{57} (2008), no.~1,
  343--376.

\bibitem{Gri07pp}
D.~Grieser, \emph{Spectra of graph neighborhoods and scattering},
  arXiv:0710.3405v4 [math.SP] (2007).

\bibitem{Har00}
M.~Harmer, \emph{Hermitian symplectic geometry and the factorization of the
  scattering matrix on graphs}, J. Phys. A \textbf{33} (2000), no.~49,
  9015--9032.

\bibitem{HC06}
P.~Hej\v{c}\'ik and T.~Cheon, \emph{Irregular dynamics in a solvable
  one-dimensional quantum graph}, Phys. Lett. A \textbf{356} (2006), no.~4-5,
  290--293.

\bibitem{Hur00}
N.~E. Hurt, \emph{Mathematical physics of quantum wires and devices.}, Kluwer
  Academic Publishers, Dordrecht, 2000.

\bibitem{JN01}
A.~Jensen and G.~Nenciu, \emph{A unified approach to resolvent expansions at
  thresholds}, Rev. Math. Phys. \textbf{13} (2001), no.~6, 717--754.

\bibitem{Kat80}
T.~Kato, \emph{Perturbation theory for linear operators}, Springer-Verlag,
  Berlin Heidelberg New York, 1980.

\bibitem{KS99}
V.~Kostrykin and R.~Schrader, \emph{{K}irchhoff's rule for quantum wires}, J.
  Phys. A: Math. Gen. \textbf{32} (1999), no.~4, 595--630.

\bibitem{Kre08pp}
D.~Krej\v{c}i\v{r}\'{\i}k, \emph{Spectrum of the laplacian in a narrow curved
  strip with combined dirichlet and neumann boundary conditions}, ESAIM:
  Control, Optimisation and Calculus of Variations (2008), to appear. Preprint
  available on arXiv:0803.0937 [math.SP].

\bibitem{KK05}
D.~Krej\v{c}i\v{r}\'{\i}k and J.~K\v{r}\'{\i}\v{z}, \emph{On the spectrum of
  curved quantum waveguides}, RIMS, Kyoto University \textbf{41} (2005), no.~3,
  757--791.

\bibitem{Kuc02}
P.~Kuchment, \emph{Graph models for waves in thin structures}, Waves Random
  Media \textbf{12} (2002), no.~4, R1--R24.

\bibitem{Kuc04}
\bysame, \emph{Quantum graphs. {I}. {S}ome basic structures}, Waves Random
  Media \textbf{14} (2004), no.~1, S107--S128.

\bibitem{Kuc05}
\bysame, \emph{Quantum graphs. {II}. {S}ome spectral properties of quantum and
  combinatorial graphs}, J. Phys. A: Math. Gen. \textbf{38} (2005), no.~22,
  4887--4900.

\bibitem{KZ01}
P.~Kuchment and H.~Zeng, \emph{Convergence of spectra of mesoscopic systems
  collapsing onto a graph}, J. Math. Anal. Appl. \textbf{258} (2001), no.~2,
  671--700.

\bibitem{KS03}
K.~Kuwae and T.~Shioya, \emph{Convergence of spectral structures: a functional
  analytic theory and its applications to spectral geometry}, Comm. Anal. Geom.
  \textbf{11} (2003), no.~4, 599--673.

\bibitem{LCM99}
J.~T. Londergan, J.~P. Carini, and D.~P. Murdock, \emph{Binding and scattering
  in two-dimensional systems}, vol.~60, Springer-Verlag, Berlin, 1999, Lecture
  Notes in Physics Monographs.

\bibitem{MV07pp}
S.~Molchanov and B.~Vainberg, \emph{Laplace operator in networks of thin
  fibers: spectrum near the threshold}, arXiv:0704.2795v1 [math-ph] (2007).

\bibitem{Pos01}
A.~Posilicano, \emph{A {K}re{\u\i}n-like formula for singular perturbations of
  self-adjoint operators and applications}, J. Funct. Anal. \textbf{183}
  (2001), 109--147.

\bibitem{Pos07pp}
\bysame, \emph{Self-adjoint extensions of restrictions},
  arXiv:math-ph/0703078v2 (2007).

\bibitem{Pos05}
O.~Post, \emph{Branched quantum wave guides with {D}irichlet boundary
  conditions: the decoupling case}, J. Phys. A: Math. Gen. \textbf{38} (2005),
  no.~22, 4917--4931.

\bibitem{Pos06}
\bysame, \emph{Spectral convergence of quasi-one-dimensional spaces}, Ann.
  Henri Poincar{\'e} \textbf{7} (2006), 933--973.

\bibitem{RSII}
M.~Reed and B.~Simon, \emph{Methods of modern mathematical physics}, vol. II:
  Fourier Analysis, Self-Adjointness, Academic Press, New York-San
  Francisco-London, 1975.

\bibitem{RS01}
J.~Rubinstein and M.~Schatzman, \emph{Variational problems on multiply
  connected thin strips. {I}. {B}asic estimates and convergence of the
  {L}aplacian spectrum}, Arch. Ration. Mech. Anal. \textbf{160} (2001), no.~4,
  271--308.

\bibitem{RS53}
K.~Ruedenberg and C.~W. Scherr, \emph{Free-electron network model for
  conjugated systems. {I}. {T}heory}, J. Chem. Phys. \textbf{21} (1953), no.~9,
  1565--1581.

\bibitem{Sai01}
Y.~Sait\={o}, \emph{Convergence of the {N}eumann {L}aplacian on shrinking
  domains}, Analysis (Munich) \textbf{21} (2001), no.~2, 171--204.

\bibitem{Sel84}
S.~Selberherr, \emph{Analysis and simulation of semiconductor devices},
  Springer-Verlag, Wien New York, 1984.

\end{thebibliography}
\end{document}